\newif\iflong
\newtheorem{lemma}{Lemma}
\newtheorem{definition}{Definition}
\newtheorem{remark}{Remark}
\begin{document}

\def\thelemma{\arabic{section}.\arabic{lemma}}
\def\thetheorem{\arabic{section}.\arabic{theorem}}
\def\thecorollary{\arabic{section}.\arabic{corollary}}
\def\thedefinition{\arabic{section}.\arabic{definition}}
\def\theexample{\arabic{section}.\arabic{example}}
\def\theproposition{\arabic{section}.\arabic{proposition}}
\def\thecondition{\arabic{section}.\arabic{condition}}
\def\theassumption{\arabic{section}.\arabic{assumption}}
\def\theconjecture{\arabic{section}.\arabic{conjecture}}
\def\theproblem{\arabic{section}.\arabic{problem}}
\def\theremark{\arabic{section}.\arabic{remark}}

\newcommand{\manualnames}[1]{
\def\thelemma{#1.\arabic{lemma}}
\def\thetheorem{#1.\arabic{theorem}}
\def\thecorollary{#1.\arabic{corollary}}
\def\thedefinition{#1.\arabic{definition}}
\def\theexample{#1.\arabic{example}}
\def\theproposition{#1.\arabic{proposition}}
\def\theassumption{#1.\arabic{assumption}}
\def\theremark{#1.\arabic{remark}}
}

\newcommand{\beginsec}{
\setcounter{lemma}{0}
\setcounter{theorem}{0}
\setcounter{corollary}{0}
\setcounter{definition}{0}
\setcounter{example}{0}
\setcounter{proposition}{0}
\setcounter{condition}{0}
\setcounter{assumption}{0}
\setcounter{conjecture}{0}
\setcounter{problem}{0}
\setcounter{remark}{0}
}
\newcommand{\la}{\lambda}
\newcommand{\eps}{\varepsilon}
\newcommand{\ph}{\varphi}
\newcommand{\vr}{\varrho}
\newcommand{\al}{\alpha}
\newcommand{\bet}{\beta}
\newcommand{\gam}{\gamma}
\newcommand{\kap}{\kappa}
\newcommand{\s}{\sigma}
\newcommand{\sig}{\sigma}
\newcommand{\om}{\omega}
\newcommand{\Gam}{\mathnormal{\Gamma}}
\newcommand{\off}[1]{}
\newcommand{\Del}{\mathnormal{\Delta}}
\newcommand{\Th}{\mathnormal{\Theta}}
\newcommand{\La}{\mathnormal{\Lambda}}
\newcommand{\X}{\mathnormal{\Xi}}
\newcommand{\PI}{\mathnormal{\Pi}}
\newcommand{\Sig}{\mathnormal{\Sigma}}
\newcommand{\Ups}{\mathnormal{\Upsilon}}
\newcommand{\Ph}{\mathnormal{\Phi}}
\newcommand{\Ps}{\mathnormal{\Psi}}
\newcommand{\Om}{\mathnormal{\Omega}}

\newcommand{\D}{{\mathbb D}}
\newcommand{\M}{{\mathbb M}}
\newcommand{\N}{{\mathbb N}}
\newcommand{\Q}{{\mathbb Q}}
\newcommand{\R}{{\mathbb R}}
\newcommand{\U}{{\mathbb U}}
\newcommand{\Z}{{\mathbb Z}}
\newcommand{\T}{{\mathbb T}}
\newcommand{\A}{{\mathbb A}}
\newcommand{\HH}{{\mathbb H}}

\newcommand{\EE}{{\mathbb E}}
\newcommand{\FF}{{\mathbb F}}
\newcommand{\PP}{{\mathbb P}}
\newcommand{\ONE}{\boldsymbol{1}}

\newcommand{\calA}{{\cal A}}
\newcommand{\calB}{{\cal B}}
\newcommand{\calC}{{\cal C}}
\newcommand{\calD}{{\cal D}}
\newcommand{\calE}{{\cal E}}
\newcommand{\calF}{{\cal F}}
\newcommand{\calG}{{\cal G}}
\newcommand{\calH}{{\cal H}}
\newcommand{\calI}{{\cal I}}
\newcommand{\calJ}{{\cal J}}
\newcommand{\calL}{{\cal L}}
\newcommand{\calM}{{\cal M}}
\newcommand{\calN}{{\cal N}}
\newcommand{\calP}{{\cal P}}
\newcommand{\calR}{{\cal R}}
\newcommand{\calS}{{\cal S}}
\newcommand{\calT}{{\cal T}}
\newcommand{\calU}{{\cal U}}
\newcommand{\calZ}{{\cal Z}}
\newcommand{\calV}{{\cal V}}
\newcommand{\calX}{{\cal X}}
\newcommand{\calY}{{\cal Y}}
\newcommand{\calq}{{\cal q}}

\newcommand{\bI}{{\mathbf I}}
\newcommand{\bD}{{\mathbf D}}
\newcommand{\bJ}{{\mathbf J}}
\newcommand{\bK}{{\mathbf K}}
\newcommand{\bT}{{\mathbf T}}
\newcommand{\bq}{{\mathbf q}}
\newcommand{\bu}{{\mathbf u}}
\newcommand{\bv}{{\mathbf v}}
\newcommand{\bo}{{\mathbf o}}
\newcommand{\bee}{{\mathbf e}}
\newcommand{\bi}{{\mathbf i}}
\newcommand{\bn}{{\mathbf n}}
\newcommand{\bff}{{\mathbf f}}
\newcommand{\bd}{{\mathbf d}}
\newcommand{\bs}{{\mathbf s}}
\newcommand{\bb}{{\mathbf b}}
\newcommand{\ba}{{\mathbf a}}

\newcommand{\scrA}{\mathscr{A}}
\newcommand{\scrM}{\mathscr{M}}
\newcommand{\scrS}{\mathscr{S}}
\newcommand{\scrU}{\mathscr{U}}
\newcommand{\scrI}{\mathscr{I}}
\newcommand{\scrP}{\mathscr{P}}

\newcommand{\mq}{\mathfrak q}

\newcommand{\frA}{\mathfrak{A}}
\newcommand{\frM}{\mathfrak{M}}
\newcommand{\frS}{\mathfrak{S}}

\renewcommand{\proof}{\noindent{\bf Proof.\ }}

\newcommand{\lan}{\langle}
\newcommand{\ran}{\rangle}
\newcommand{\uu}{\underline}
\newcommand{\oo}{\overline}
\newcommand{\supp}{{\rm supp}}
\newcommand{\diag}{{\rm diag}}
\newcommand{\trace}{{\rm trace}}
\newcommand{\w}{\wedge}
\newcommand{\lt}{\left}
\newcommand{\rt}{\right}
\newcommand{\pl}{\partial}
\newcommand{\abs}[1]{\lvert#1\rvert}
\newcommand{\norm}[1]{\lVert#1\rVert}
\newcommand{\mean}[1]{\langle#1\rangle}
\newcommand{\To}{\Rightarrow}
\newcommand{\til}{\widetilde}
\newcommand{\wh}{\widehat}
\newcommand{\dist}{{\rm dist}}
\newcommand{\grad}{\nabla}
\newcommand{\iy}{\infty}
\newcommand{\AddedTh}[1]{\textbf{\textcolor{Black}{#1}}}

\newcommand{\be}{\begin{equation}}
\newcommand{\ee}{\end{equation}}
\newcommand{\mmme}[1]{{\textcolor{green}{MSh: #1}}}

\newcommand{\tab}{\hspace*{0.3in}}
\newcommand{\Tab}{\hspace*{1.0in}}
\newcommand{\no}{\nonumber}
\newcommand{\noi}{\noindent}
\newcommand{\txt}{\textrm}
\newcommand{\ds}{\displaystyle}
\newcommand{\RR}{\mathbb{R}}
\newcommand{\vf}{\varphi}
\newcommand{\del}{\frac{\partial}{\partial t}}
%\definecolor{co}{rgb}{0.8,0,0.8}
%\definecolor{gr}{gray}{0.5}
\newcommand{\gr}{\color{gr}}
\newcommand{\vp}{\varepsilon}
\newcommand{\E}{{\mathbb E}}

\newcommand{\FMO}[1]{\textbf{\textcolor{red}{***FIXME OMER #1 ***}}}

\parskip 0pt
\title{ VM Scaling and Load Balancing via Cost Optimal MDP Solution }
%\author{Mark Shifrin, Erez Biton, Omer Gurewitz}%\\
%	Ben-Gurion University, Nokia} % \#1570010543}
\author{Mark Shifrin, Roy Mitrany, Erez Biton, Omer Gurewitz% <-this % stops a space
%\IEEEcompsocitemizethanks{\IEEEcompsocthanksitem M. Shell was with the Department
%of Electrical and Computer Engineering, Georgia Institute of Technology, Atlanta,
%GA, 30332.\protect\\
%% note need leading \protect in front of \\ to get a newline within \thanks as
%% \\ is fragile and will error, could use \hfil\break instead.
%E-mail: see http://www.michaelshell.org/contact.html
%\IEEEcompsocthanksitem J. Doe and J. Doe are with Anonymous University.}% <-this % stops a space
%\thanks{Manuscript received April 19, 2005; revised August 26, 2015.}
}
\maketitle
%--------------- table of contents ----------------------------------------
%
\begin{abstract}
We address a cost optimization problem faced by a user  who runs instances of applications in a remote cloud configuration constructed of multiple virtual machines (VMs). Each VM runs a single application instance which can execute tasks specific to that application.
Managing the VMs involves a sophisticated trade-off between cloud-related demands, which are expressed by the provisional costs of leased cloud resources, and exogenous cost demands expressed by service revenues that are typically bound to service level agreements (SLAs). %In particular, networking-based tasks' demands should comply with service level agreements (SLAs) while computational tasks have latency demands. %The  cost constraints apply at all cases.
The internal costs may include VM deployment$/$termination cost, and VM lease cost, per time unit. The exogenous costs refer to rewards accumulated due to the successfully accomplished tasks being run by each application instance. In the case where the SLA restricts performance to a certain load level at each VM, tasks incoming at VMs that reached that level are rejected. Rejections cause fines deducted against the rewards.
In addition, the performance level is also quantified, namely, by means of a delay cost, according to the average delay experienced by tasks.
%We address cost optimization problem of VM scaling and scheduling which considers both the exogenous and internal cost constraints.
%The user can monitor the state of all deployed VMs and to decide on scaling (deploying/terminating) a VM and load balancing (selecting a VM) upon each new task arrival.
Typical examples for specific applications which fall within this class of problems include handling of scientific worklflows and network functioning virtualization (NFV).  In the latter case the rejection penalties are particularly high.
%The existing VM scaling tools provide only limited instrumentation for cost optimization and are not designed to fully account for exogenous constraints.
%In particular, the scaling is handled by harnessing heuristic algorithms, while in order to effectively treat the aforementioned problem one needs to find and apply optimal scaling and load balancing policies.

%includes functionalities which are translated into activation and control of VMs according to stochastic demand of special networking tasks - virtual networking functions (VNFs). The successful accomplishment of such a task is generally associated with economical incentive while a failure is penalized, according to SLAs.
%For the combined problem this method  suboptimal heuristics.
We model this problem by cost-optimal load balancing to a queuing system with a flexible number of queues, where a queue (VM) can be deployed, can have a task directed to it and can be terminated.
We analyze the system by Markov decision process (MDP) and numerically solve it to find the optimal policy, which
captures the aforementioned costs and performance constraints.  %and includes decision making on queue deployment and termination, considers delay cost, cost associated with keeping a deployed VM (even if idle), task processing incentives which are modeled as scheduling reward and rejection fine.
Within this constrained framework,  we also investigate the impact of average VM deployment time.
We show that the optimal policy possesses decision thresholds which depend on several parameters.
We validate policies found by MDP, through directing an exogenous computational tasks flow, which is typical of image processing, to a set-up implemented on AWS. The policy which we propose here can be adopted by any cloud infrastructure. %within the existing AWS architecture.
 %within the existing AWS architecture.

%This research is a part of European Union Horizon 2020 Research and Innovation Programme SUPERFLUIDITY.   The conference version of this paper was published in~\cite{Shifrin2016}.

\end{abstract}
%\section{Introduction}

%To summarize our contribution, we presetn
%We also able to answer the question what is the average number of VMs that will be needed for the enterprise, for a given VNF scenario.

%%%%%%%%%%%%%%%%%%%%%%%
\section{Introduction}
%%%%%%%%%%%%%%%%%%%%%%%

 Cloud computing is a distributed paradigm which enables remote usage of resources (such as servers, storage, applications and information), by a vast user population distributed over the world. These cloud resources are typically allocated in the form of virtual machines (VMs) which are assigned to cloud users (typically applications) on a rental basis on demand \footnote{Even though our method applies both for public and private clouds, for clarity we will mostly refer to public cloud}. Obviously, to support the enormous number of applications utilizing the cloud and the large variability of incoming requests, a dynamic resource allocation mechanism is essential. Such resource allocation mechanisms are essential both by cloud providers who provide the infrastructure and the consumers, the applications. The objectives and constraints of either are not necessarily aligned and sometimes are even opposed. Specifically, on the one hand the cloud provider needs to place the VMs on physical machines taking into account various constraints such as determining which physical machine best suits each VM's requirements, spreading VMs over multiple physical machines for fault tolerance. Moreover, it needs to overcome failures, and accommodate bursty VM requests trying to admit all incoming requests, to ensure load balancing while cutting expenses, increasing revenue, etc. On the other hand, the user utilizing the cloud (an application) is charged based on the resources (VMs) it retains. Its incentive is to lower costs, hence to utilize as few VM resources as possible. On the other hand, the number of requests from the application’s consumers, handled by a VM hosting the application can affect performance (e.g., many requests, higher latency). Accordingly, the application aspires to deploy a dynamic resource allocation mechanism which adopts the number of leased VMs to the current demands. Due to its high importance and its impact on bringing cloud computing to more domains in our daily life, both aspects of resource allocation have drawn a lot of attention over the last few years both by the industry and by the academy. Notably, for most objectives the resource allocation problem is an NP-hard optimization problem (see, e.g.,~\cite{arfeen2011framework,li2011cloud} for the problem hardness discussion). Throughout this paper we concentrate on the application’s mechanism. We assume that the cloud providers are abundant with resources which they lease on request, and charge the application based on various pre-agreed parameters, such as the cost per VM, installing or removing VMs, etc.

As  mentioned, the trade-off between performance and number of VMs an application acquires, necessitates a dynamic resource allocation mechanism which increases or decreases the resources, in the form of VMs, to fulfill the elastic demands of the application. Such adaptive resource allocations are typically attained via two  mechanisms, load balancing and scaling. Load balancing is responsible for distributing the application traffic-load or application requests between the leased VMs, such that no VM is overloaded while others are less congested. Scaling refers to the procedure of adding (scale-out) or releasing (scale-in) VMs according to the demand \footnote{Scaling also refers to the procedure of adding or removing resources such as computing power, memory, storage, or network capabilities, to an existing machine (termed Scale-up and Scale-down, respectively). This procedure is not common hence will not be addressed throughout this paper.}. Clearly, the scaling mechanism which dynamically adds or removes resources (VMs) needs to ensure that the application’s clients will be satisfied and if there exists a Service Level Agreement (SLA) between the application and its consumers, it is kept. Accordingly, the dynamic resource allocation mechanism should balance financial considerations (e.g., adding or maintaining unnecessary VMs) and user satisfaction (e.g., performance).

Common application-resource-allocation mechanisms utilize an on demand approach, aiming at maximizing resource utilization by means of avoiding both overload and underload of VMs while maximizing users’ performance, typically consent with users’ SLA. However, it is clear that without pre-planning and, specifically, without trying to predict future demands, the resources maintained by an application can be satisfactory on the short term, yet badly utilized on the long run. Furthermore, due to its commercial nature, there are other perspectives, besides the complying with the SLAs, that should be considered. For example, due to the cost-effect reasoning, a user can prevent adding a VM (scaling out), thus compromising performance, or even risk a rejected task allocation request, or, on the contrary, bear the cost of multiple active yet underutilized VMs, in order to avoid future performance degradation and/or a rejection. In addition, a user can decide to migrate tasks from one VM to others in order to terminate a VM and save the expense or not to divert traffic to a VM in order to release it in the near future. Typically, the cloud providers implement application programming interfaces (APIs) for applications to set up several predefined parameters (thresholds) for execution scale out/in. %For example, to define that when CPU utilization is above a threshold scale out or if it is below a predefined threshold scale in (***FIXME OMER***).
However, a cloud user, who deploys an application(s) for clients, and reduces costs by controlling its own scaling and load balancing mechanisms, needs to take all these considerations into account, dynamically and automatically, transparently to the clients using the application, and without altering the application with respect to each action taken.

To this end, the set of costs incurred by the application can be conceptually divided into two types: the \textit{provisional costs} (PC) and \textit{service  revenues} (SR). PC refer to the payments incurred by the cloud provider, i.e., the costs that the cloud provider charges the cloud hosts (the applications) for utilizing its resources. These costs are typically per resource utilization (e.g., the number of VMs the application utilizes per unit time, the cost for acquiring or releasing a VM, etc.). The SR refer to the costs which are \textit{exogenous} to the cloud and stem from the SLA (with exogenous clients) and the performance level measured by the cloud user, i.e., the enterprise through its cloud exploitation. Specifically, the PC are accumulated continuously according to the retained resources (VMs) which are controlled dynamically via the application \textit{scaling} mechanism. Namely, an enterprise which runs the application decides on scale out/in operations which increase/decrease (i.e., deploy/terminate) the number of VMs, respectively. Obviously, utilizing scale up and down mechanisms which can upgrade (add resources to) or downgrade (remove resources from) existing VMs, can also effect the PC, however these procedures which are not broadly supported (e.g., they are not supported by Amazon Web Services (AWS) which we exploit throughout this paper) and are not addressed in this paper. On the other hand, scale up and down mechanisms do affect SR by various factors. The foremost factor is the performance which is directly affected by load on each VM, which is affected by both the scaling and load balancing mechanisms. Note that the effect on performance is not necessarily a simple function of the number of employed VMs or the number of users utilizing the application. For example, the expected delay experienced by a video streamer request, definitely relies on the load of the VM which addresses
this request, however it relies on several other factors (such as the negative impact of other requests being concurrently served on the same VM ) and is not necessarily a linear function of the load. In our mathematical formulation, we address this generality.

Throughout this paper, we will assume that the PC are typically set according to a price-list which is agreed upon on advance (note that in private clouds it is possible to quantify the costs of utilizing resources as well). As for the SR, it is expressed via rewards or fines, associated with successfully processed or rejected tasks, respectively. We assume that these costs are negotiable between the application and its users, yet this is determined a-priori (e.g., via SLA). In order to materialize the service quality associated with the performance (e.g., delay), we translate it to \textit{performance costs} which stand for additional fines inflicted on the enterprise, thus encouraging faster processing.
%(**** Mark I do not understand this next paragraph: Note that while the measured performance depends on the internal cloud parameters, the costs incurred by the experienced delay refer to the SR. This is due to the fact that the actual value of the delay is dictated by the SLA between the enterprise and the users whose applications it manages.
Typically, the calculation of these costs is implemented by means of pre-negotiated delay cost factors (DCF) which appropriately scale the delay function. The DCF value versus the experienced delay provide an additional complexity to the problem. %***).
%(*** I think this should be part of the model: We will assume that the enterprise will know the exact status of its VM configuration at all times\footnote{In reality this knowledge is easily acquired by monitoring tools, e.g., CloudWatch in AWS.} ****).
%(*** To this end,
We term by \textit{cost parameters } (CP) the fixed set of all costs VM deployment/termination cost, DCF, reward, rejection fine and VM holding cost. % ***???**).

In this paper, we explore and formulate the enterprise’s objective of optimizing its revenues under the set of costs. Specifically, we devise a decision maker (DM) which implements optimal scaling, load balancing and admittance control, based on the predefined CP and the current known system state.

Scaling and load balancing in the cloud has drawn a lot of attention over the last few years both by the academy and by the industry; many studies have addressed the problem of dynamic scaling in the cloud. The review on cloud auto-scaling~\cite{lorido2014review} would categorize our work within a class of reinforcement learning (RL-based) scaling techniques, as in, e.g.,~\cite{tesauro2006hybrid} which utilizes Q-learning for resource allocation in data center. Yet, works mentioned wherein do not account for load balancing and do not capture the cost trade-offs we address.
Another class of works (~\cite{chieu2009dynamic}) concentrate on architectural novelties, while the scaling is governed by heuristically set thresholds. The recent survey in~\cite{xu2017survey} on load balancing reasons the NP-hardness of the general LB problem in cloud and categorizes the existing algorithms into heuristic and meta-heuristic. However, we argue that for the carefully defined assumptions, an optimal solution can be presented, without sacrificing the major part of the generality. Hence, this paper is the first to give a mathematical formulation from the perspective of the enterprise's long-run cost optimization, while capturing the trade-off structure in this degree of generality. The solution we provide directly leads to the practical methodology, to be implemented within existing or yet-to-come scaling and load balancing architectures.

To this end, we model the problem by a \textit{queuing system with a dynamic yet limited number of queues}. %; each queue stands for VM running VNF instances.
The controllable queuing system represents the dynamic VM deployment and tasks balancing problem which we treat by introducing a \textit{stochastic control model based on Markov decision process (MDP)} formulation. The solution to the MDP provides the optimal policy.
%We assume that tasks arrive with constant average rate.

While the set of costs described above implies no trivial policy could exist, some of the parameters have contradicting impacts which may dictate certain properties of the policy.
For example, in the case where the delay cost function sharply increases with the load, the DM will attempt to deploy as many VMs as possible.
On the other hand, in the case where keep-alive cost is comparatively high, the DM may prioritize minimization of the active VMs number.
A distinctive impact has a \textit{VM deployment time}, which we separately explore.

%formulated as a Markov Decision Process.

%One of the objectives of our model is to facilitate a fast scaling of VNF.
Once applied, the policy potentially reveals the optimal average number of active VMs and the corresponding optimal average number of tasks in the queue associated with the VMs.
This allows the enterprise to assess the demands and to plan ahead.
Moreover, the solution to the MDP is expressed by \textit{value function} which indicates what are the costs and revenues the enterprise will receive in the long run, for a given scenario along with its set of parameters.

To summarize, the contribution of this paper is as follows:
\begin{itemize}
	\item We formulate load balancing of task arrivals and VM scaling problem by stochastic model solvable by MDP.
	\item The MDP is numerically solved providing the optimal policy which captures the various costs constraint.
	\item A detailed research of the value function structure is performed and the insights of the structure of the optimal policies are brought.
	\item We separately investigate the impact of the VM deployment time.
	\item We built a simulation set-up on the basis of AWS platform and apply the policy retrieved from MDP solution.
\end{itemize}
Note that this set of constrains we considered was more elaborate than AWS setting allows.
Hence, while we used AWS for the exemplifying purposes, the method we present in this paper is generic, and will fit any cloud based platform. The only needed input for the MDP consists of system parameters (e.g. average VM deployment time, average task execution time). The implementation methodology, which is suggested to be harnessed by enterprises, may be expressed by a closed loop, as it is schematically depicted  in Figure~\ref{figCL}. %for the schematic depiction of the closed loop combined with a cloud platform.
\begin{wrapfigure}{R}{0.5\textwidth}
	\begin{center}
		%\begin{align*}
		\centerline{\includegraphics[width=15em]{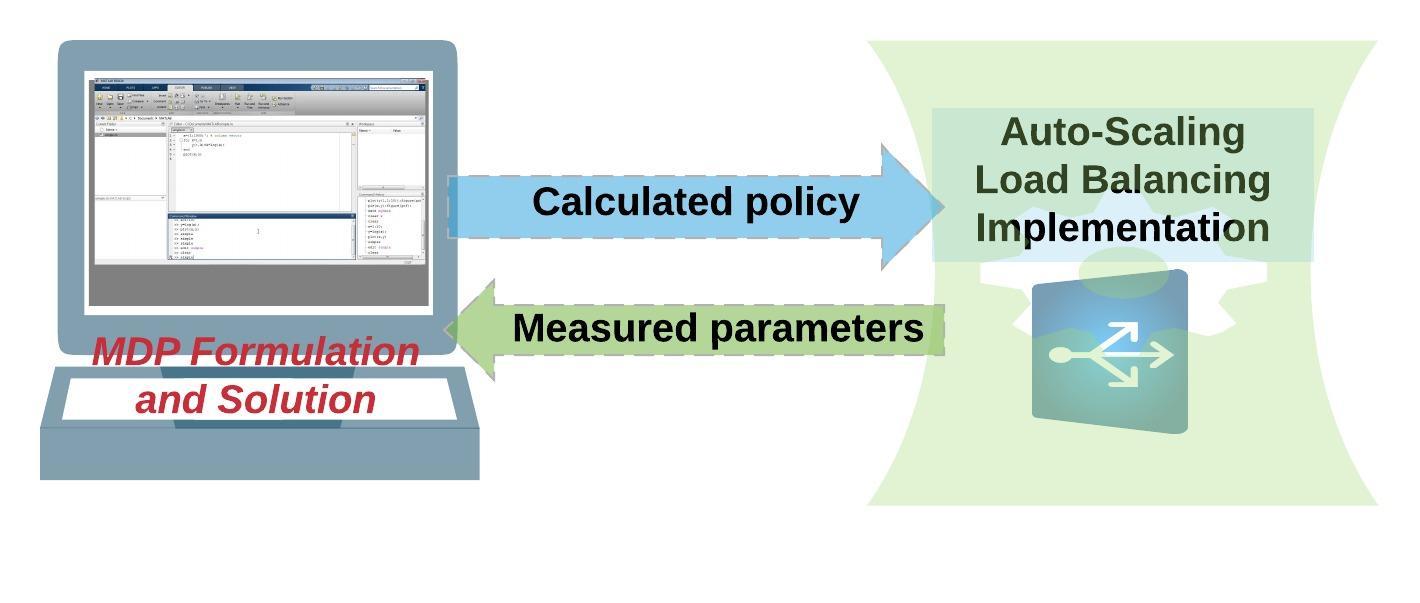}}
		%\end{align*}
		\vspace{-10pt}
		\caption{\sl\small
			Policy calculation closed loop. The scaling time statistics are firstly measured for an arbitrary policy and then used for the optimal policy calculation. At least one additional measurement is needed to adjust to the impact of the MDP driven policy. }\label{figCL}
	\end{center}
\end{wrapfigure}

The rest of the paper is organized as follows. The next section provides general background on VM control in cloud computing, by exemplification via NFV and scientific workload scenarios. This part might be familiar to the experts in cloud computing. We provide system description in section~\ref{sec:syst}. %Section~\ref{sec:mdp} deals with mathematical formulation of the MDP.
Section~\ref{sec:mdp} provides the MDP formulation. The study of the optimal policy and of the impact of different parameters is discussed in Section~\ref{sec:Ores}. Section~\ref{sec:AWS} provides the review of our AWS-based implementation and demonstrates the results of applying the optimal policy on AWS. Section~\ref{sec:related}  gives the related work and concludes this paper. 
\section{VM cloud-based control background}\label{dec:prel}
We provide preliminaries and bring the related examples which demonstrate the problem of VM control in practice. We rely on two scenarios. The first one is driven from NFV paradigm while the second one is from offloading massive scientific computation workloads. Both examples are characterized by the tension raised by an enterprise's aim to maximize the revenues accumulated in the process of successful task execution, on the one hand, and keeping the cloud expenditures as low as possible, on the other hand. 
The breakthrough of the NFV concept gave rise to novel demands triggered by the goal to cost-effectively run persistent networking functionalities. 
Fulfilling the cost optimality demands, in NFV context, implies 
binding VM deployment with solutions to variety of networking aspects. 
%\begin{comment}
Fo the clarity, we name three possible interacting sides as follows: 1) The \textit{users}, who supply the demands, 2) The \textit{operator} which is responsible for all control decisions and 3) The \textit{cloud provider} which supplies the VMs.   
Once a VM is deployed, it is uniquely associated with a specific task type, which is related to a specific virtual networking function (VNF). 
%\end{comment}
Note that two possible configurations of deployment are possible: A) where all VMs are deployed over the remote cloud. In this case all HW-related costs are paid by operator to the cloud provider and B) The cloud is private and is actually owned by the operator. The costs are associated with the operator's activities for handling the owned VMs. 

A deployed VM, which is normally started with an image containing the desired VNF, is disposed to handle a flow of latency sensitive tasks (according to the SLA) which are typically directed to it by a separately (on-premises or externally) implemented orchestrator which combines in itself a decision maker (DM) which is responsible for scaling. In addition, there is an virtualized entity which is responsible for the load balancing. (It may be deployed separately from an orchestrator.)
%Every VM which has a VNF instantiated on it, acts as a part of general scheme which is exemplified in $(a)$ to $(e)$ stages in Figure~\ref{fig22}. Henceforth, in this paper, as far as NFV paradigm is concerned we focus on parts $(a),(e),(f)$ of the scheme. 
We  assume that the functionality type of VNF of interest is known and fixed, and that there is a constant influx of tasks to be served by the corresponding VMs deployment, where all such VMs  have instances of the corresponding VNF implemented on them. 
%An enterprise, which decides to virtualize any of their network services or effectively offload their computational demands needs first to \textit{deploy} ("\textit{build}") the corresponding service, in order to be able to \textit{run} the tasks of that type. 
%The deployment process involves having leased a VM and loading on it the corresponding software. 

The procedure of deployment takes time which depends on various factors which include the type of VM, availability, booting time,  and deployment of the image which includes the application.
We account for a one-time deployment and termination cost applied for each VM. In the case these costs are not applied at a specific cloud provider, they are just assumed to be zero in the model formulation. %Hence, having idle resources is undesired. 
%On the other hand, the delay involved with the deployment or lack of space for the new tasks can cause some tasks to be rejected from running, thus inflicting a profit loss to the enterprise.
%Note that in some cases, the \textit{termination} (sometimes referred to as"\textit{destroy}") operation, i.e., the process of releasing VMs can 
%also incur a cost. 
\begin{comment}

%In order to show the connection of the objectives, we pursue in this paper, to the general scheme, we use the NFV example. 
\begin{wrapfigure}{R}{0.4\textwidth}
%	\begin{center}
		%\begin{align*}
		\vspace*{-20pt}
		{\includegraphics[width=15em]{VNFgeneral2}}
		%\end{align*}
		\caption{\sl\footnotesize
			NFV mission chart possible phases:  
			$(a)$ Multi-user enterprise which runs multiple networking related tasks.
			$(b)$ The internal engine decides about internal priorities and NF types which are about to be virtualized on the cloud. 
			$(c)$ VNF optimal placement engine, which conducts VNF placement with accordance to service precedence and user types and locations. 
			$(d)$ Authentication of the outgoing tasks and permissions settings. This block should be coordinated with cloud security groups settings.
			$(e)$ The DM which decides on deployment and displacement of new VMs and performs load balancing. (Load balancing may be implemented as a separate automated part, based on the cloud provider features.)
			%\end{enumerate} 
		}\label{fig22}
%	\end{center}
\end{wrapfigure}
\end{comment}
Once deployed, the operator pays per time of having the deployed VMs, regardless of the load. This cost is charged per unit of VM's leased time. In AWS, for example, the payment is normally applied per hour of a usage. Alternatively, in a private cloud owned by the operator, the payments are related to costs of not leasing those VMs for other revenue making applications.
The tasks are directed to VMs upon a connection establishment via the load balancer (LB). The LB might be separately defined and deployed by an operator who sets its configuration, yet physical details might be left transparent\footnote{The alternative, where the operator would not use cloud provider controlling tools is also possible. In this case she will implement her own load balancing and orchestrating SW and will use it for addressing VMs. }.
For simplicity, our model does not account for the costs associated with the deployment of the load balancing and orchestrating machines. Operator communicates VMs via the load balancer, which can monitor the VMs at all times and the orchestrator which is responsible for the scaling. The load balancing of the tasks and scaling decisions are made according to the load at VMs. In this work, we assume that monitoring mechanisms are implemented and provided by the cloud provider. In AWS, for example, it can be done by CloudWatch and alarms, which prove to be very effective and provide versatile controlling flexibility.  
% If less VMs then less server = less money, also less VMs = more VMs for other application that can bring revenues
%Maybe refer to that in the private cloud case, the cloud provider is from a different devision from the VNF operator, and internal funds are transferred between the devisions

We express the delay cost function via weighting an average delay by a constant termed delay cost factor. This constant fits the incentives associated with  quality of service demands and the SLA's on the side of the enterprise to the performance level it could achieve on their system deployed on the cloud. Note that we allow the delay cost function to non-linearly  grow with the load. That is, even for the parallel processing, the delay can increase faster than linear.
Note that this phenomenon is especially obvious for NFV use cases. For example, at network-aware application, when VM share both CPU and transmission bit-rate. Note that the delay is not directly fined by the cloud, but its implications are merely translated into economical values.

%In the two aforementioned examples, 
%Yet, it is less important in quality-minded treatment of computational task flows. The architecture of transferring of computationally intensive scientific missions is comparatively simple. Once a task had been dispatched to a VM, it occupies a slot of resources till its execution is finished. Then, the calculation results are transferred to the enterprise using the same orchestrator.
The area of offloading scientific workflows gained a recent promotion in the sense of various heavy computational missions, e.g., long-time training of large-scale multilayer neural networks, treatment of huge database queries within Big Data solutions, etc. The specific works dedicated to such implementation approaches are listed in section~\ref{sec:related}.
%Consider an enterprise which needs to run networking tasks of a certain type. Namely, we assume all VMs are similar and able to run similar VNF tasks (e.g., flows a firewall handles), or tasks associated with scientific flows of similar type. 
In contrary to the NFV scenarios, where in most case rejections are intolerable, a strict limit to the number of allowed tasks at each VM can be set. 

The considerations and constraints described in the both examples suggest a trade-offs between delay costs, task rejection thresholds and VMs scaling thresholds which can not be easily predicted, and thus have to be formally analyzed.
Even as in NFV case, where the rejection fines are set to be particularly high, there is no clear intuition how to choose VMs scaling thresholds. Hence, we explore how the delay function impacts the decision policy.

%Hence, the decision whether to scale out or in, typically performed by the VNF manager (VNFM), remains an important problem that needs to be addressed, especially in dynamic scenarios. 

We argue that  modeling of the problem by \textit{queuing system} with a dynamic yet limited number of queues, where each queue stands for VM running application instances, is a natural approach. 
The solution can be readily provided for any given maximal number of VMs, yet due to the physical constraints of the cloud provider and economical limitations of any enterprise, we assume the number is finite. This assumption, likewise, allows for a simpler model formulation and analysis.
%At departure from a queue (i.e., running of VNF task ends) which has been left empty, DM decides whether to keep that queue alive or to destroy it.   
%In conjunction with the scaling decisions, we are also facing a load balancing challenge, steering the traffic flows to the different VMs and balancing the load between them. In this study, 
%Hence, we tackle both the scaling decision and the scheduling, i.e. the  load balancing strategy as a single problem. 
In the most simple scenario,  load balancing will merely amount to having equally loaded VMs. The trade-off in this case means having a high average load with a small number of VMs versus having a low average load with a larger number of deployed VMs. This is the scenario we test on AWS setup, as is explained in section~\ref{sec:AWS}. However, having in mind deployment time and cost, this trade-off still represents a significant challenge as the optimal policy derivation is not straightforward.

We additionally assume that the VM placement problem and authentication issues are independently solved prior to the scheduling, and that the solutions are static or have no effect on scheduling-related parameters (e.g. VM deployment time.)
Henceforth we focus on the cost-optimal VM scaling and the load balancing challenges which are relevant for a given specific type of tasks. In order to circumvent privacy-related issues, we also assume that all tasks are coming from the same user identity. Extending to several users or/and to several task types is straightforward and merely converges to solving  several independent problems, where only minor adjustments to the setting are needed.

Note that NFV and offloading workflows are only a portion of the possible scenarios that can be associated  with the  described system. We believe that  NFV is a best exemplifying candidate both because of its global nature and of the fact that it is clearly associated with a persistent long-term task flows. Hence it constitutes an obvious yet open and important problem of long-run cost optimization. Any other entity with persistent flows of tasks can be considered, provided it introduces the appropriate translations of task completion successes and failures to the rewards and fines.

For the sake of exemplification, we will use NFV terminology whenever further detailed exemplification is needed. 

\section{Formal System Definition}\label{sec:syst}

We assume that the available cloud resources (e.g., NFV infrastructure) can host a finite yet flexible number of VMs. %In the NFV context, for simplicity, we assume that an instance of a VNF is deployed in one VM, with two instances deployed in two VMs and so forth. 
%That is, in a scale out operation, adding a VM translates to adding another instance of the VNF. 
We further assume, that a load balancer (LB) is deployed and can handle all traffic demands irrespective of the number of VMs, see Figure~\ref{fig1} for the schematic presentation. %which corresponds to the detailed depiction of parts $(a),(e),(f)$ in Figure~\ref{fig22}. %Note again that the description in Figure~\ref{fig1} conforms to the general problem of handling computational or other processable tasks flow through the cloud, rather than being limited to the NFV scenario. For example, AWS provides an infrastracture which includes an elastic load balancer (ELB) which acts over a predefined scaling group.

Our theoretical model is based on a Markovian assumption. That is, the service demand is modeled by a Poisson process of arriving tasks with average rate $\lambda$.
Upon each task arrival event, (service request), the decision making (orchestrator) decides whether to instantiate a new queue (VM) to handle the demand, as long as the maximal number of active queues (VMs) is not reached. In addition, it directs the task towards a load balancer, which balances tasks across active queues. Once service is ended (i.e., task departure) and a VM is left idle (the queue is left empty), the orchestrator may keep the VM or “destroy” it.
%Upon each task arrival event, the Decision Maker (DM) decides to which VM the arriving task should be directed. The VM deployment decisions are made at arrival times, as long as the maximal number of active VMs is not reached. 
%At departure from a queue  which has been left empty, DM decides whether to keep that queue alive or to terminate it.   
In what follows, we will use naming Decision Maker (DM) in order to refer to the operator and queue in order to refer to a VM.

The maximal number of running task on a single VM is limited by the borderline number, above it the performance degrades below the minimal quality of service and, hence, should never be exceeded.

The DM aims to find a policy which  maximizes the total income in the long run.
We assume exponentially distributed service time and a time it takes from the moment of VM deployment decision till the moment it is fully deployed. While the arrival part of the Poisson assumption is rather natural, the same assumption about the service and deployment part mean that we impose an approximation of the service times and on VM deployment times. However, the drawback of this approximation proved to be non-significant, as show our results in Section~\ref{sec:AWS}.
Moreover,  to account for general (but known) service times, it is merely needed to extend our MDP model to a Semi-Markov Decision Process (SMDP) model, an effort that is purely technical. As the objective of this work is to present a general methodical paradigm, we leave the extension to SMDP out of the scope of this paper.
Accordingly, incoming tasks are scheduled to one of the active queues or rejected from service, according to  the load balancing policy.
Each  VM  can handle services in a parallel manner. 
Namely, the total processing rate is equally shared between all tasks currently running in the queue. Hence, tasks never wait for a full completion of previously arrived tasks but are rather processed in parallel. The VM's limited resources allow processing of a limited number of services, denoted by $B$. The parameter $B$ is calculated based on the service SLAs. We omit these detailed calculations and assume $B$ is given. For simplicity, we assume all VMs are identical, hence characterized by equal $B$.
We assume that the  exponential service times have maximal average rate $\mu$. We assume that task processing initiation at each VM has no time overhead. However, VM deployment time is significant. For analytical simplicity we also assume it is exponentially distributed with average rate $\zeta$. 
%exponentially distributed rates ***FIXME*** is exponentially distributed with a rate equal to the sum rates
Each VM is modeled by a queue with a buffer size $B$, having up to $B$ servers with a total processing rate equal to $\mu$.
Since the minimum of exponentially distributed rates is equal to their sum, the total processing rate is always equal to $\mu$. 
%Hence, 
While this model reflects the ability of VMs to provide concurrent resource sharing, it can be easily modified for the FIFO service, with only minor changes. 
Note that for the correspondence with the mathematical model, which will follow, we use the terms VM and queue interchangeably.

The Service revenues (SR) are primarily composed of rewards for admitted tasks and fines for rejected tasks. We assume that an admitted task gives a fixed reward, while a rejected task incurs a fine, denoted by $\{r,f\}$, respectively. 
%Each queue can be deployed and destroyed dynamically, according to the demand.
Intuitively, the number of queues may infinitely grow for some sets of parameters, as long as the total cost is minimized. For example, in the case where $\lambda\gg\mu$, and rejected tasks incur high fines, it always profitable to have as many VMs deployed as needed in order to avoid tasks rejections. On the other hand, the number of available queues is expected to be bounded by both general system limitations and economic considerations.
We will naturally assume henceforth that the number of active queues is limited. % given $\bq$.
%We also assume that all tasks are processed in FIFO order, with equal rate at all queues (VMs). For simplicity, we assume the processing times are exponential, yet, general SMDP model formulation allows any known service time distribution. 
In case that there is no active VM, i.e., immediately after the VNF on-boarding and before its VM deployment  or in case that all active VMs are overloaded and cannot accommodate new tasks an incoming task would be rejected. However, for the hypothetical configuration where the deployment time is instantaneous, rejection could be avoided by an immediate VM deployment. This configuration merely has theoretical value and is unpractical. Hence we will assume for our validation setup that the deployment time is not negligible at all times. Furthermore our equations account for the deployment times. 
%We model the described system by finite yet \textit{flexible} number of finite queues, which can be deployed and destroyed. Each such queue stands for NFVI point of presence. 
%The total demand of VNF instances originated by the enterprise is modeled by Poisson process of arriving tasks. The incoming tasks are scheduled into one of the \textit{active} queues, according to some policy $\pi$. 
\subsection*{Set of the queues}
We define next vector of queues, which describes both the state of all queues and the number of tasks in each one of them. Hence, such a vector uniquely reflects the system state. We will need the following definition for this purpose.
Denote the set $\mq=\{-2,-1,0,\cdots,B\}$. 
The vector of all queues at time $t$ is denoted by $\bq(t)$. The maximal number of VMs is given by the size of this vector, namely $|\bq|=\bn$, for some system-related $\bn$ which was precalculated and fixed. %, where each queue is denoted by $q_i(t)$. 
The state-space (st.-sp.) defines all possible states the VMs (queues) could possibly have. Denote it by the following
\[
\Q=\mq^\bn\;\;,\bq(t)\in\Q\;
\]
The components of $\bq(t)$ correspond to the number of tasks in queues and are denoted by $q_i(t)$, where $i$ is the general indexing of the queues. To this end, $q_i\in\mq$. 
The state "$-2$" means the queue is inactive and state "$-1$" means the queue is being currently deployed. Note that this is different from state $0$ which means the queue is active and deployed, but empty.
(We will omit the time notation in cases where the current time is of no importance to the analysis.) While for simplicity we assumed $\mu_i=\mu$, the extension to the system with different processing rates is straightforward, at expense of the appropriate st.-sp augmentation. %Each queue serves tasks with equal rate with average equal $\mu$.
%\subsection*{Set of states for a queue - the State-Space}
%The queues are limited by $-2\leq q_i(t)\leq B$, $i\in\{1,\cdots,|\bq|\}$,
%The vector $\bq(t)$ forms the \textit{state-space} of the MDP at time $t$. 
The number of all states is denoted by $\rvert\Q\rvert$. We will use the definition of the state space $\Q$ in the next section for the formal MDP formulation.
\subsection*{Cost structure}
We now formally define the cost parameters which can be logically divided to the provisional costs (PC) and the service revenues (SR).
%For simplicity, we assume all queues are identical with service rate equal to $\mu$. 
%The reward for each admitted task and the fine for rejected task are denoted by couple $\{r,f\}$.
%There is fixed cost for queue deployment and destroy and fixed maintenance cost per time unit for each active queue. 
%Prior to the definition of the costs set, we note that not all the cost types which will follow are directly charged by all cloud resource providers. However, some of the corresponding actions require delay, which we assume was properly expressed by a suitable cost. %That is, the delay is not directly fined by the cloud, but its implications are merely translated into economical values.
We define first the structure of the delay cost. Denote by $h$ the delay cost per unit of processing time, 
associated with a number of hosted tasks. That is, $h$ serves as a delay cost factor (DCF) responsible for the part of the SR which associate the SLA with the performance level. At $i$th queue, at time $t$,  
cost equal to $h_i(t)$ per unit of time is inflicted. %given by $q_i(t)$, is denoted by . %We address two delay models. 
%In the first model, linear delay is given by 
%\begin{equation}
%h_i(t)=q_i(t)h \label{1}
%\end{equation}
The delay cost model is represented by function which increases with the number of hosted tasks at VM.
This structure reflects a load increase (and, consequently, the overall queuing time) with the number of residing tasks at a VM.
In particular we express this load impact by using an increasing function $\eta$ as follows,   
\begin{equation}
h_i(t)=q_i(t)\eta(q_i(t))h, \label{1b}
\end{equation}
Where $\eta\geq1$ for all possible $q_i$ values and is positive increasing with $q_i$. 
Namely, more busy VMs perform with higher delay. Hence, this cost structure penalizes for having busy VMs. Therefore, this formulation allows to capture parallel processing, such that the delay cost of \textit{all} running tasks is appropriately weighted by their quantity at a VM.
Note that if we assume $\eta(i)=1$ for all $i$, the cost will degenerate to the simple linear model.
The cumulative delay cost till time mark $t$ is merely given by
\[
H_i(t)=\int_0^{t}q_i(t)\eta(q_i(t))h\:dt
\]
The DCF, together with earlier defined rewards and fines $\{r,f\}$ accomplish the definition of SR.
As for cost parameters related to PC,
the deployment cost $\beta$ is applied each time a queue is activated. The termination cost $\psi$ is applied each time a queue is terminated.
%To 
%Heuristically, the difference function $\alpha$ is composed of holding cost and delay cost. The holding cost cancels out since $a$ and $b$ have same number of empty queues. Similar assertion holds for $a'$ and $b'$, $a''$ and $b''$. However, the delay cost difference in states $a'$ and $b'$ is lower if compared to $a''$ and $b''$, because additional packets in the same queue cause higher delay, while having first packet in a new queue. Generally, 
Observe that the total delay cost is the lowest when tasks are equally dispersed over queues. %In particular, observe the delay structure in~\eqref{1b}.
%We now introduce queue maintenance costs. %by defining the sets as follows.
%The \textit{queue operation cost set} includes deployment, destroy and maintenance cost per unit of time and given by the triplet $O=\{b,d,k\}$.
%The first to parameters are values which are instantaneously augmented to the cost upon queue activation (i.e. VM deployment) and deactivation (i.e VM destroy).
%The third parameter $k$ stands for the cost per unit of time paid for any active queue.
Once a VM is empty no delay cost is applied, however the keep-alive cost for having a deployed VM is always charged, even if the VM was idle.
Denote this cost by $\kappa$ per a time unit. This cost is also a part of PC.
Figure~\ref{fig1} depicts a physical enterprise which constantly offloads tasks with average rate $\lambda$ to his virtual private space which contains up to $3$ VMs, where minimal tolerable service rate is given by $\frac{\mu}{4}$.

\begin{wrapfigure}{R}{0.55\textwidth}
%\begin{center}
%\begin{align*}
%\centerline{\includegraphics[width=26em]{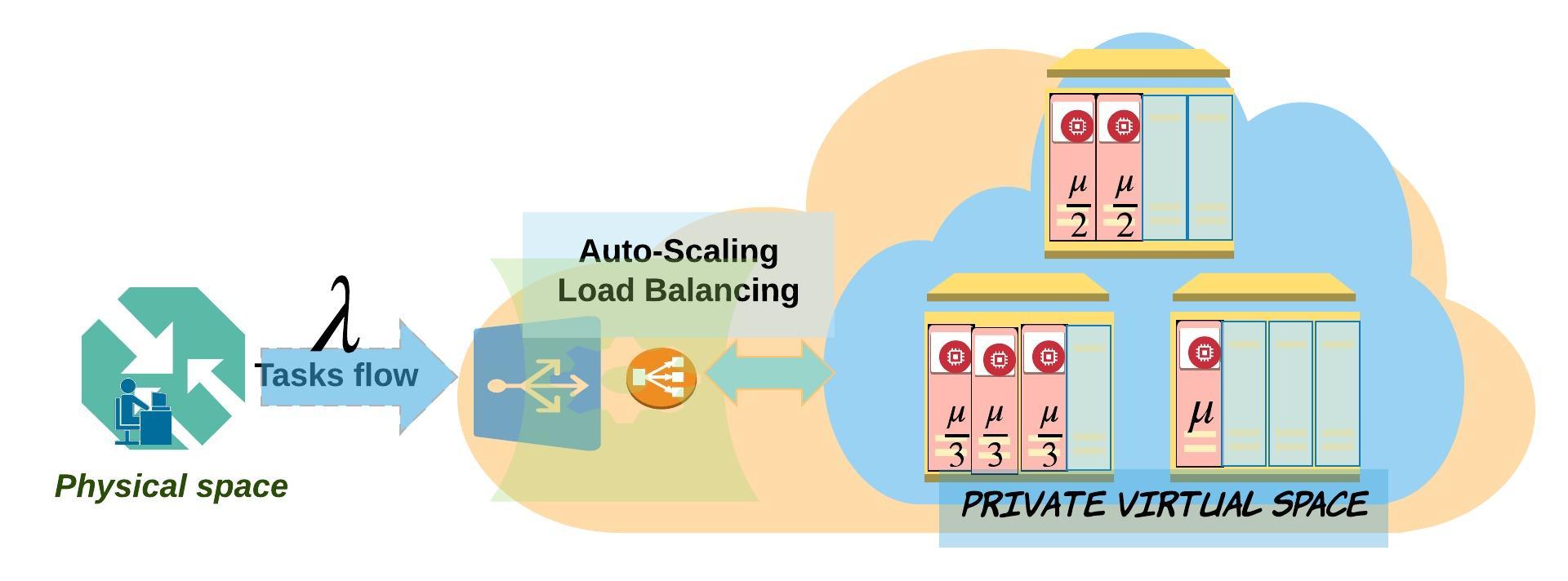}}
\includegraphics[width=26em]{sys2}
%\end{align*}
\vspace*{-10pt}
\caption{\sl\footnotesize
System scheme, containing up to $3$ VMs with maximal capacity of $4$ tasks}\label{fig1}
%\end{center}
\end{wrapfigure}

We are now disposed to formulate the MDP.
%\subsection*{MDP statement}
%Simulations, policy analysis.
\section{MDP formulation}\label{sec:mdp}
%We first summarize all system parameters.
%Denote VNF instances arrivals as Poisson arrival process with average inter-arrival times given by $\lambda$. 

%The cost of queue deployment (resp. destroy) is denoted by $b$ (resp. $d$).
%The cost of the active queue is given by $h$ per unit of time.

%At each task arrival, the task will be scheduled into one of the queues, according to some policy $\pi$.
%Generally, the scheduling policy may be predefined, or be part of the optimization objective. 
We define the action space first, denote it by $\A$. Denote action $\ba\in\A$ at time $t$ as
\begin{footnotesize}
	{\footnotesize
\[\ba=\{\bu,\bb,\bd\}\]
\normalsize}
\end{footnotesize}
In particular, denote the \textit{load balancing vector} $\bu$ of length $\bn$, such that
\begin{footnotesize}
	{\footnotesize
\[
u_i(t)\in\{0,1\},\;\;0\leq\sum_i^\bn u_i(t)\leq1
\] 
\normalsize}
\end{footnotesize}
The sum is equal to $0$ in the case the decision was a rejection. Otherwise, the scheduling decision at $t$ is expressed by scheduling into queue $i$, hence 
\[\exists i,i\in[1,\cdots,\bq]\mid u_i(t)=1,u_{j\neq i}=0\]
The {queue activation policy} is applied at each task arrival and is described by \textit{build action vector} 
\[\bb(t)=\{b_i(t)\}, \;b_i(t)\in\{1,0\},\]
where $i$ indexes the queue and the possible values for $b_i$ stand for "deploy", and "not deploy", respectively.  
The \textit{queue termination policy} is applied at each departure event and is described by \textit{terminate action vector} 
\[\bd(t)=\{d_i(t)\}, \;d_i(t)\in\{1,0\}.\]

In what follows we will deal with counting processes of the general form:
\begin{footnotesize}
	{\footnotesize
\[
V(t)=\sup\{m\;; \sum^m_{i=0}v(i)\leq t\}, %\label{eq:sup}
\]
\normalsize}
\end{footnotesize}
where $v(i)$ is the time between an increment (e.g., arrival time, service time) $i-1$ and $i$, for some process $V$. 
Denote the arrivals counting process as $A(t)$ and task completion counting processes as $\{D_i(t)\}$, where $i$ indexes the queues. %Process counting rejected tasks denoted by $R(t)$.
Define the following indicator functions:
\begin{footnotesize}
{\footnotesize
\begin{definition}[Queue indicators]
For $1\leq i\leq\bn$
\vspace*{0pt}
\begin{align*}
%\begin{itemize}
&\text{Inactive queue :}  
&\bI_i^{\bi}(q)=1\;\;\; &\text{ iff } &\;\;q_i=-2\\
&\text{Deploying queue :}  
&\bI_i^{\bd}(q)=1\;\;\; &\text{ iff } &\;\;q_i=-1\\
&\text{Empty but active queue (idle):}  
&\bI_i^{\bee}(q)=1\;\;\;  &\text{ iff }& \;\;q_i=0\\
&\text{Queue with exactly one task: } 
&\bI_i^{\bo}(q)=1\;\;\; & \text{ iff } &\;\;q_i=1\\
&\text{Full queue: } 
&\bI_i^{\bff}(q)=1\;\;\;  &\text{ iff }& \;\;q_i=B\\
&\text{None of the above (denoted as normal): } 
& \bI_i^{\bn}(q)=1\;\;\;  &\text{iff } &\;\;2\leq q_i<B
%\end{itemize}}
\end{align*}
\end{definition}
\normalsize}
\end{footnotesize}
%We will omit the reference to the state $q(t)$ to which the indicators are applied where the state is clear from the context.
In most general form, these actions are allowed to be taken at arrival events and any departure events, that is, once the counting processes $A$ and $D_i$ increase. 
%The delay cost is added for all non-empty queues, according to the number of residing tasks. 
%Finally, the scheduling reward or fine is applied each arrival.

Define the infinite horizon discounted cost functional, discounted with discount factor $\gamma$, for policy $\pi$ as follows:
\begin{footnotesize}
	{\footnotesize
\begin{align}
& J^{\pi}=\int_0^\infty e^{-\gamma t}\cdot\no\\
&\Big[-\big(\bb(t)\cdot \beta+
\bd(t)\cdot \psi\big)\big(dA(t)+\sum_i^\bn dD_i(t)\big)\label{eqJ1}\\
&\;-\sum_i^\bn\big(h_i(t)+\kappa*(1-\bI_i^{\bi})\big)dt\label{eqJ2}\\
&\;\;+(\bu(t)\cdot r-f*(1-\sum_i^\bn \bu_i(t))dA(t)\Big], \label{eqJ3}
\end{align}  
\normalsize}
\end{footnotesize}
where $h_i(t)$ is substituted from Equation~\eqref{1b}. The cost can be divided into the components as follows. The first part of the display above, i.e.~\eqref{eqJ1}, stands for the queue deployment cost, denote it as $J^{\pi}_b$, and termination cost, denote it as $J^{\pi}_d$. The second part, i.e.~\eqref{eqJ2}, stands for queue holding cost, denote it as $J^{\pi}_h$, and delay cost, denote it as $J^{\pi}_\kappa$.
The third part, i.e.~\eqref{eqJ3}, stands for the cost associated with scheduling rewards, denote it as $J^{\pi}_r$, and the cost associated with rejection fines, denote it as $J^{\pi}_f$. That is, the cost is otherwise written by using the aforementioned components as follows, 
\[J^{\pi}=-J^{\pi}_b-J^{\pi}_d-J^{\pi}_h-J^{\pi}_\kappa+J^{\pi}_r-J^{\pi}_f.\]
%\begin{footnotesize}
%	{\footnotesize
%\[
%J^{\pi}=-J^{\pi}_b-J^{\pi}_d-J^{\pi}_h-J^{\pi}_\kappa+J^{\pi}_r-J^{\pi}_f
%\]
%\normalsize}
%\end{footnotesize}
The value function associated with initial state $q$ is given by
\begin{footnotesize}
	{\footnotesize
\[
V_{q}=\max_{\pi}J^\pi(q).
\]
\normalsize}
\end{footnotesize}
We now write the Bellman equation for a simplified and more realistic scenario
assuming that build operations can be only done at arrivals, while destroy operations can be only done at departures.
Denote by $e_i$ vector of length $\bn$ with value $1$ at $i$th coordinate and zeros in all other coordinates. 
The Bellman equation reads
\begin{footnotesize}
	{\footnotesize
\begin{align}
& V_q=\big[\sum_i^\bq\bI_i^{\bn}\mu_iV_{q-e_i}+\sum_i^\bq\bI_i^{\bff}\mu_iV_{q-e_i}
+\sum_i^\bq\bI_i^{\bo}\mu_i\max\{V_{q-e_i},V_{q-2e_i}-\psi\}+\sum_i^\bq\bI_i^{\bd}\zeta_iV_{q+e_i}\nonumber\\
&\;+\lambda\max\big\{\max_{\bb=\{0,1\}}\big[\max_{i,\bI_i^{\bff}=0,\bI_i^{\bd}=0}\{V_{q+e_i}-\bb\beta\Pi_j\bI_j^{\bi}+r\}\big],\max_{\bb=\{0,1\}}\big[V_{q}-f-\bb\beta\Pi_j\bI_j^{\bi}\big]\big\}
+C(q)\big]\delta_q, \label{eq:b}
\end{align}
\normalsize}
\end{footnotesize}
where the cost function $C(q)$ and normalization factor $\delta$   are calculated by
\begin{footnotesize}
	{\footnotesize
\begin{align}
&C(q)=\sum_i^\bq h_i+\kappa*(1-\bI_i^{\bi})(1-\bI_i^{\bd}),\text{   and   }
\delta_q=\sum_i^\bq(1-\bI_i^{\bi})(1-\bI_i^{\bee})(1-\bI_i^{\bd})\mu_i+\sum_i\bI_i^{\bd}\zeta_i+\lambda+\gamma\label{eq_Cq}
\end{align}
\normalsize}
\end{footnotesize}
%and
%\begin{footnotesize}
%	{\footnotesize
%\[
%\delta_q=\sum_i^\bq(1-\bI_i^{\bi})(1-\bI_i^{\bee})(1-\bI_i^{\bd})\mu_i+\sum_i\bI_i^{\bd}\zeta_i+\lambda+\gamma
%\]
%\normalsize}
%\end{footnotesize}
Note that in the case where all queues are full, the outcome of the inner maximization is empty. In this case, the second term in outer maximization is selected. The maximization over deployment decision which is denoted by $\bb$ is made both in rejection and task scheduling cases. Hence, the decision to reject a task, but to start deployment of a previously idle queue is allowed. See that the product $\Pi_j\bI_j^{\bi}$ is equal to $1$ only in the case at least one queue is non-idle. Otherwise, it is equal to $0$ and no actual VM deployment happens.
In the ideal case of instantaneous VM deployment, that is, when $\zeta_i=0,\;\forall i$, we substitute $\bI_i^{\bd}=0$. The state of being deployed then does not effectively exists, hence write
\begin{footnotesize}
	{\footnotesize
\begin{align}\label{eq:b1}
& V_q=\big[\sum_i\bI_i^{\bn}\mu_iV_{q-e_i}+\sum_i\bI_i^{\bff}\mu_iV_{q-e_i}+  
\;\;\;\sum_i\bI_i^{\bo}\mu_i\max\{V_{q-e_i},V_{q-2e_i}-\psi\}+\\
&\;\;\;\lambda\max\big\{\max_{i,\bI_i^{\bff}=0}\{V_{q+e_i+\bI_i^{\bi}e_i}-\beta\bI_i^{\bi}+r\},V_{q}-f\big\}+C(q)\big]\delta_q, \no
\end{align}
\normalsize}
\end{footnotesize}
%where the cost function $C(q)$ is calculated by
%\[
%C(q)=\sum_i^\bn h_i+\kappa*(1-\bI_i^{\bi})(1-\bI_i^{\bd}),
%\]
Observe that the action space is effectively restricted, such that only one queue at each time is allowed to be deployed, at arrival opportunities. The corresponding newly arrived task will be scheduled at the newly deployed queue. While it restricts in some sense the action space, this setting is practically reasonable. 
\iflong
The derivation of equation~\eqref{eq:b1} can be found in Appendix~\ref{app:proof}.
\else
The derivation of the equation above is in the on-line version of this paper.
\fi
The importance of simplistic version of the system described by equation~\eqref{eq:b1} is mostly explorational and it is analyzed in Section~\ref{sec:Ores}. In~\ref{sec:zeta} the impact of the parameter $\zeta$ is analyzed. The complete and more realistic system version described by equation~\eqref{eq:b} is additionally validated by AWS-based implementation in section~\ref{sec:AWS}.
Bellman equation, such as~\eqref{eq:b1}, belongs to the well known class of equations which are solved by the value function $V$ which constitutes a fixed point of an operator which corresponds to the equation. In another words, one defines operator $\calT$ acting over $V$ in~\eqref{eq:b1} as follows
\begin{footnotesize}
	{\footnotesize
\begin{align}
& \calT V_q=\big[\sum_i\bI_i^{\bn}\mu_iV_{q-e_i}+\sum_i\bI_i^{\bff}\mu_iV_{q-e_i}+\sum_i\bI_i^{\bo}\mu_i\max\{V_{q-e_i},V_{q-2e_i}-\psi\}+\nonumber\\
&\;\;\;\lambda\max\big\{\max_{i,\bI_i^{\bff}=0}\{V_{q+e_i+\bI_i^{\bi}e_i}-\beta*\bI_i^{\bi}+r\},V_{q}-f\big\}+C(q)\big]\delta_q, \label{eq:b2}
\end{align}
\normalsize}
\end{footnotesize}
Since $V$ is a fixed point the display above writes $V=\calT V$. The detailed theory behind~\eqref{eq:b2} can be found in, e.g.,~\cite{bertsekas1995dynamic} and is not elaborated in this paper. We merely present the value iteration algorithm which is numerically applied in order to calculate $V$. Denote by $\calT_{\ba}$ application of the operator associated with some action $\ba\in\A$, that is, each $\calT_\ba$ refers to correspondent triplet of actions, namely, to $\ba=\{\bu,\bb,\bd\}$. Then, the value iteration algorithm is merely given in Algorithm~\ref{algo:valit}. Observe that line $7$ amounts to applying $\calT$. That is, $\max_\ba\calT_\ba=\calT$.

{	\incmargin{1em} 
	\restylealgo{boxed} \linesnumbered
	\begin{algorithm}[h]
		\begin{footnotesize}
			{\footnotesize
		\SetLine
		\label{algo:valit}
		\SetLine
		\SetKwInOut{Input}{input}
		\SetKwInOut{Output}{output}
	
		\Input{ Initial guess of $V_n$, $n\in\{1,\cdots,\rvert\Q\rvert\}$ } 
		\Output{ \begin{enumerate}
				\item Value function $V$ - fixed point solution
				\item Optimal policy $\pi^*$
			\end{enumerate} }
	
		$i \leftarrow 0$ \\
		\While {not converged}{
		
			\For {$n=1 \ldots \rvert\Q\rvert$}{
			$\triangleright$	{$\quad V_n^{\ba,(i+1)}  \leftarrow \calT_\ba V_n^{(i)}$}  \\
		
			}
		
			\For {$n=1 \ldots\rvert\Q\rvert$}{
	
				$V_n^{(i+1)} \leftarrow \max_\ba(V_n^{\ba,(i+1)})$ \\
			}
			
			$i \leftarrow i+1$ \\
			
		} 
	
		{\bf return } $(V)$
				\vspace*{0pt}
		\caption{ Value iteration for optimal VM scaling and scheduling.} 
		\normalsize}
	\end{footnotesize}
	\end{algorithm}}

Note that the size of of the state space exponentially grows with the maximal number of VMs. For example, for maximum of $5$ VMs, each one is allowed to accommodate up to $5$ tasks, we have $5$ queues with $8$ possible states each.
Hence, in this case, 
\begin{footnotesize}
	{\footnotesize
\begin{equation}\label{eq5-8}
\mq=\{-2,-1,0,\cdots,5\}, \;\; \rvert\Q\rvert=8^5=32768 \text{ \footnotesize{states}}.
\end{equation}
\normalsize}
\end{footnotesize}
Therefore, algorithm~\ref{algo:valit}, although is written for simplicity in a scalar form, was carefully treated vector-wise. See also the implementation comments in the following sections.
%We use this example system throughout the paper.
%\input{ORL}
%\input{OtimeVMnotZero}
%%%%%%%%%%%%%%%%%%%%%%%%%%%%%%%%%%%%%%%%%%%%%%%%%%%%%%%%
\section{Numerical Study of the optimal scaling and scheduling policy}\label{sec:Ores}
%%%%%%%%%%%%%%%%%%%%%%%%%%%%%%%%%%%%%%%%%%%%%%%%%%%%%%%%
As can be seen in Equation~\eqref{eq:b}, the value function $V$ is affected by many parameters hence it is hard to grasp the effects of each one on the value function, all the more so to grasp the inter-relation between the various sets of parameters and variables and their mutual effect on the value function. In this section, we provide a comprehensive study of the value functions and the corresponding policies through a comprehensive set of MATLAB simulations. The results presented herein not only explore the behavior of the value function but more importantly provide profound insight into the impact of each parameter on the value function, and consequentially on the optimal policy that should be taken. In Section~\ref{sec:AWS} we provide further insight into the mechanism suggested in this paper by partly implementing the suggested mechanism on AWS.

Since a small number of VMs (which induces a traceable state space) is sufficient to get the essence of the value function and the corresponding policies, and in order to avoid getting lost in an enormous state space, we explored a system with up to five VMs (i.e., queues). Furthermore, we simplified the model and assumed that all VMs are identical. Obviously, this limitation can be easily removed (i.e., allowing different VMs), however this option like other generalization options adds another dimension to the state-space which is already multi-dimensional, hence makes it less traceable and dilutes the insight among too many parameters and options. We start by isolating the deployment time; we assume zero deployment time ($\zeta_i=0$), and study its influence later.

%%%%%%%%%%%%%%%%%%%%%%%%%%%%%%%%%%%%%%%%%%
\subsection{Value function for negligible deployment time ($\zeta_i=0$)}
%%%%%%%%%%%%%%%%%%%%%%%%%%%%%%%%%%%%%%%%%%
In this subsection we study the value function ($V_n$) of each state in the states-space for the case $\zeta=0$. The value function attained via the Bellman equation~\eqref{eq:b1}, which is utilized to find the optimal policy, is solved iteratively via Algorithm~\ref{algo:valit}. In the sequel we will illustrate the inter-relation between the various states as obtained by the Bellman equation~\eqref{eq:b1}. Note that state $-1$ for each VM, which represents the deployment phase of the VM is omitted, as $\zeta=0$ implies zero deployment time. Figure~\ref{fig4} depicts the value function for $B=6$ (i.e., the maximal number of tasks which can be admitted by each VM is six). The horizontal axis stands for the states enumeration, where the states have been enumerated lexicographically according to the number of tasks in each VM, e.g., state $0,1,1,2,-2$ stands for zero, one, one and two tasks which are allocated in VM-1, VM-2, VM-3 and VM-4, respectively, and VM-5 is inactive.

\begin{figure}
	\begin{center}
		\begin{align*}
		{\includegraphics[width=13em]{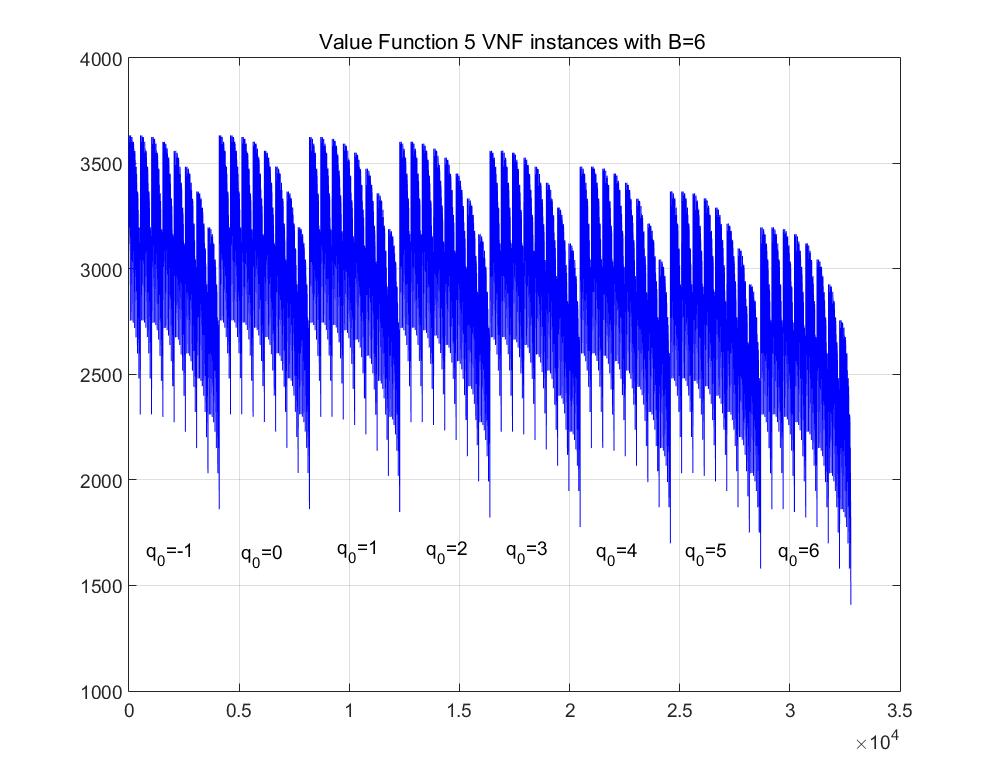}}
		{\includegraphics[width=13em]{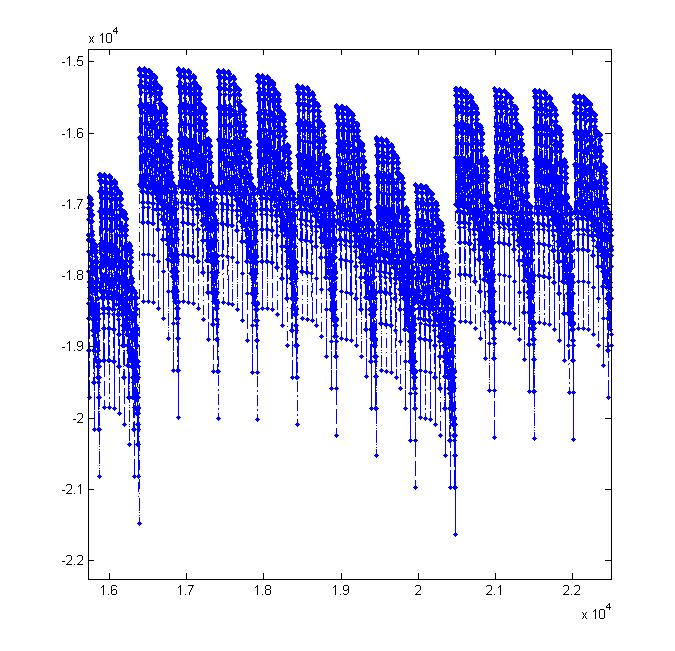}}
		{\includegraphics[width=13em]{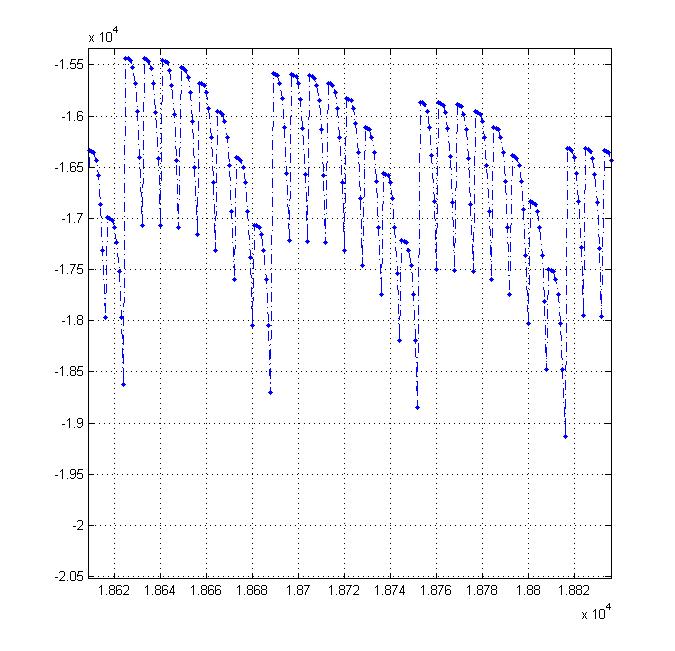}}
		\end{align*}
		\iflong\else	\vspace*{-20pt}\fi
		\caption{\sl\small
			Value function: left- all states,center -closeup over one of the regions, right - closeup inside one of the regions. Each dot corresponds to a particular $V_n$. The horizontal axis stands for the states enumeration. }\label{fig4}
	\end{center}
\end{figure}

Figure~\ref{fig4}(left) clearly depicts $8$ regions (cockscomb shape). Each region corresponds to the group of states where the number of tasks deployed on the first VM ($q_0$) was fixed. Specifically, the leftmost region corresponds to the states where $q_0$ was idle, the second from left region corresponds to the states where $q_0$ was active yet empty, the third from left region corresponds to the state where one task is active on $q_0$, etc. The regions’ shape behavior demonstrated on Figure~\ref{fig4}(left) is hierarchical, such that each region (cockscomb) comprises $8$ sub regions, each with the same cockscomb shape which also comprises an additional $8$ sub regions, etc. Each layer in the hierarchy corresponds to an additional VM with fixed number of deployed tasks (e.g., the second hierarchy corresponds to both $q_0$ and $q_1$ with a fixed number of running tasks). Figure~\ref{fig4}(midle) depicts a zoom into one of the regions depicted by Figure~\ref{fig4}(left), i.e., second hierarchy in which both $q_0$ and $q_1$ are fixed. Figure~\ref{fig4}(right) depicts the lowest hierarchy in which the number of tasks on all the VMs besides VM 5 are fixed.

Interestingly, all three figures show a decaying value function on each of the regions, which means that the more tasks are deployed, the lower the $V$. However, recall that $V$ is attained when the task is admitted, hence after its acceptance each such task’s value function is reduced by two means. First, the direct cost committed to maintain the task, and second the indirect cost due to the fact that not only the admitted task can affect the performance (cost) of the other admitted tasks, but it also occupies one of the available resources, which can potentially result in future rejection (un-admitted task), which means revenue loss. Surprisingly, also the states where no VMs are deployed ($q=-2$) attain high $V$, i.e., since the deployment costs are charged upon deployment, one could have expected that $V$ of states in which $q=0$ should be higher than those with $q=-2$. However, note the tradeoff between the deployment cost (which is already charged in the case of $q=0$) and the holding cost continuously charged for maintaining the VM. Further note, that even when the deployment delay is negligible, the strategy with respect to freeing idle VMs depends on the relation between the costs. On the one hand there is no point in baring the holding time costs keeping alive idle VMs for future use, as they can be deployed on demand whenever needed without any delay, yet on the other hand releasing a VM and re-deploying it will result in additional termination and deployment costs. Next we further explore these inter-relation costs.

%%%%%%%%%%%%%%%%%%%%%
\textbf{Keep-alive cost ($\kappa$)}
%%%%%%%%%%%%%%%%%%%%%%
We kept exploring instantaneous deployment times ($\zeta_i=0$), and examined the effect of the keep-alive cost (recall that keep-alive cost is charged per unit of time once a VM is deployed regardless of its occupancy). Besides $\zeta_i=0$ (no deployment delay) we set the task’s rejecting fine ($f$) to $10$, $\lambda=4$ and $\mu_i=1,\forall i$. We also limited the maximal number of tasks each VM can handle to 4 (Buffer size was $4$ tasks).

\begin{figure}
	\begin{center}
		\begin{align*}
		{\includegraphics[width=20em]{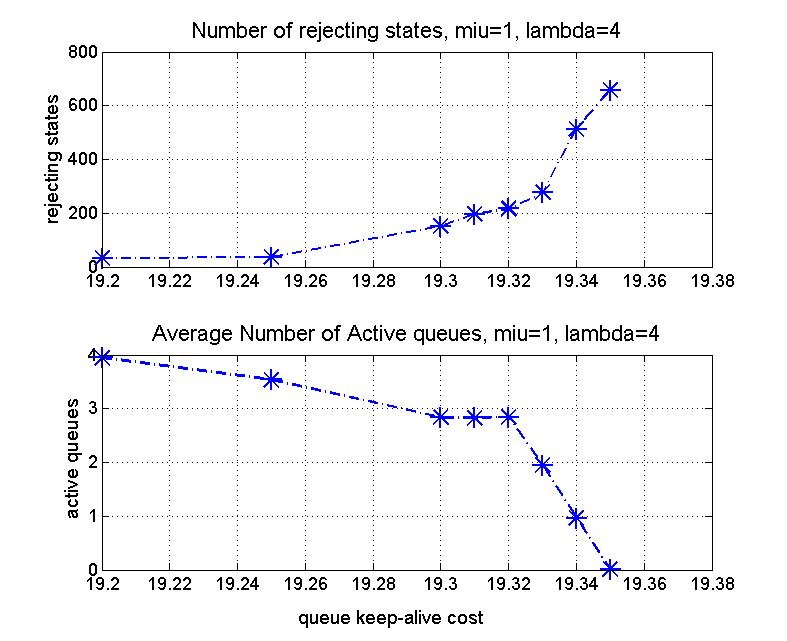}}
		{\includegraphics[width=20em]{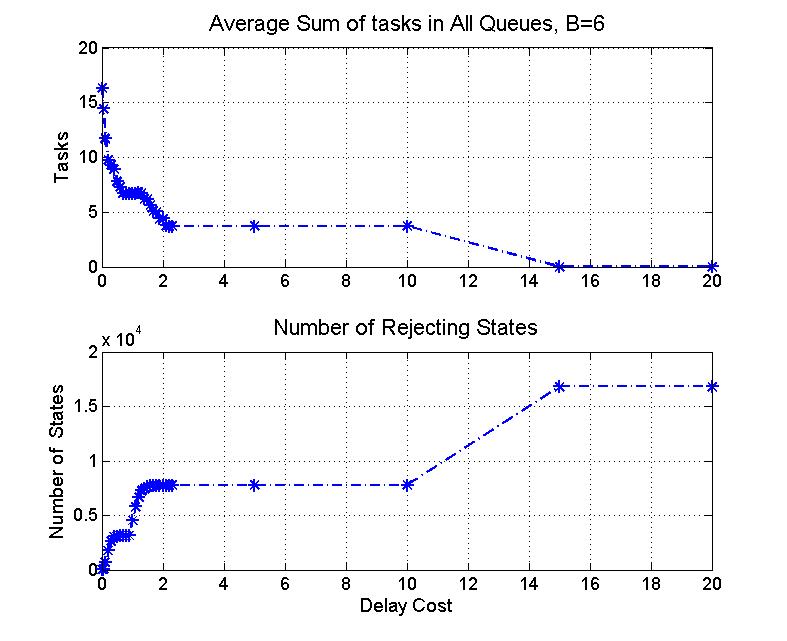}}
		\end{align*}
		\caption{\sl\small
			Left - Impact of the allocated VM cost. The upper graph shows the number of rejecting states. The lower graph shows how the number of active queues is being reduced with the cost of having an allocated VM.
			\newline Right - Impact of the delay cost.
		}\label{fig2}
	\end{center}
\end{figure}

Figure~\ref{fig2} (left) depicts the impact of the keep-alive queue cost on the task rejection states (upper left) and on the number of active VMs, lower-left. Note the tradeoff between paying the cost for maintaining VMs and the price of rejecting tasks. On the one hand due to the high task arrival rate there is an incentive to keep many VMs deployed in order to admit incoming tasks and collect the respective admission rewards. On the other hand, after admitting a task, the VM assigned to it should be kept alive at least until the task was terminated. Obviously, the lower the keep-alive cost, the more conceivable to find more deployed VMs and vice versa the higher the cost the less deployed VMs can be found.  There is a threshold for which it is more affordable to pay the fine of rejecting incoming tasks rather than maintaining a VM (i.e., a queue), which indeed can be seen in both Figure~\ref{fig2} panels on the left. Above this threshold ($\kappa \sim 19.35$) all states lead to reject policy and there are no active VMs. Note that deploying a VM due to temporal congestion can result in paying the price of maintaining an excessive number of unutilized VMs for a long time period, especially when migration of tasks between VMs is not supported. Accordingly, the decision maker should balance between the number of active VMs and the tasks’ arrival rate which is reflected by the load. Specifically, when the keep alive cost is high, the decision maker should try to maintain less yet congested VMs, at the price of rejecting a task once in a while. Indeed, as can be seen in the figure, the decline from keeping all VMs alive and keeping no VMs alive is not strict and there is a keep-alive cost range at which the number of deployed VMs gradually declined from all to no deployed VMs. Note that the decline from 3 deployed VMs to 2 and later to 1 or zero was much sharper than the slope between 4 to 3 deployed VMs.

%%%%%%%%%%%%%%
\textbf{Delay cost ($h$)}
Next we examined the cost delaying tasks (Figures~\ref{fig2} (right)). Note the tradeoff, on the one hand in order to keep delay low, one needs to preserve many active VMs and distribute the load among them. On the other hand, preserving many active VMs results in high keep-alive cost. The arrival intensity was $4.75$ and $\mu_i=1,\forall i$. The buffer size of each queue was $6$. We set the keep-alive cost to $1$. Rejecting fine was set to $10$. The delay constants we utilized were $\eta_1=1,\eta_2=1.8,\eta_3=2.5,\eta_4=3.5,\eta_5=4.5,\eta_6=5.5$, for having 1, 2, 3, 4, 5 and 6 operational tasks on a VM, respectively.
Figure~\ref{fig2} panels on the right depict the average number of managed tasks (upper) and the number of rejected states (lower). Figures~\ref{fig2}.  Observe that the number of rejecting states approached 100\% of all states at highest values of $h$.

Since throughout this simulation setup, the keep-alive cost was sufficiently low compared to the admission gain (i.e., no significant keep-alive to delay tradeoff), all the VMs were active. As expected, the scheduler balanced between the loads of the operating VMs (as anticipated by the cost model in Equation~\eqref{1b}). Expectedly, as long as the delaying costs were low, most of the tasks were admitted, and very few states were rejecting states. When the delay cost increased, keeping several tasks on a VM degraded the performance (the admission gain of a single task was lower than the extra delay cost incurred by all tasks on the designated VM). When the delay cost was sufficiently high (around 1.8) having more than one task per VM was costly, hence we could see only 4 operational tasks, one per VM. Note that the interval of keeping a single task per VM was quite large, since the delay cost when a single task was operational on a VM ($\eta_1$) was low. When the delay cost was high enough (around 15) the admission gain could not cover the task maintenance costs and all tasks were declined.

We inspected many other parameters analyzing the tradeoffs between different costs, searching for threshold policy, and trying to understand the interdependencies between parameters. For example, a threshold policy can be observed with respect to the deployment and termination of VMs as a function of several parameters and their interdependencies. For example, if $\beta$ and/or $\psi$ are high compared to keep-alive cost, the optimal policy acts to leave all queues active, even if empty. Clearly, the set of system parameters and most importantly their proportional relation will determine the optimal policy, e.g., will determine whether to admit or reject tasks, whether load balance between VMs to reduce the delay or to shift loads to less VMs in order to release a VM, etc. However, due to space limitations we only provide a sample of our results and only a glimpse at few observations, to exemplify the scheme usage.

%%%%%%%%%%%%%%%%%%%%%%%%%%%%%%%%%%%%%%%%%%%%%%%%
\subsection{Impact of VM time deployment - the $\zeta_i>0$ case} \label{sec:zeta}
%%%%%%%%%%%%%%%%%%%%%%%%%%%%%%%%%%%%%%%%%%%%%%%%
After ignoring the deployment delay, assuming that VMs can be deployed and terminated  instantaneously, in this subsection we explored the effect of deployment delays.
Obviously, the time which takes to activate and inactivate VMs can have a major effect on the policy.%\mmme{we only explored in depth deployment A leave as is}

In practice, the length of this period depends on several aspects including the type of the machine. For example, the applications which we dispatched for the execution on AWS in our implementation (Section~\ref{sec:AWS}) were deployed and run on VMs of type "tx4.large". These VMs normally took between half a minute and two minutes to deploy and boot the image. We repeated the numerical evaluation utilizing the same parameters as before, varying the deployment delay ($\zeta$). Note that $\zeta$, in our notation, stands for the rate (i.e., the reciprocal of the time).

\begin{figure}[h]%{R}{0.6\textwidth}
	\begin{center}
		\begin{align*}
		{\includegraphics[width=20em]{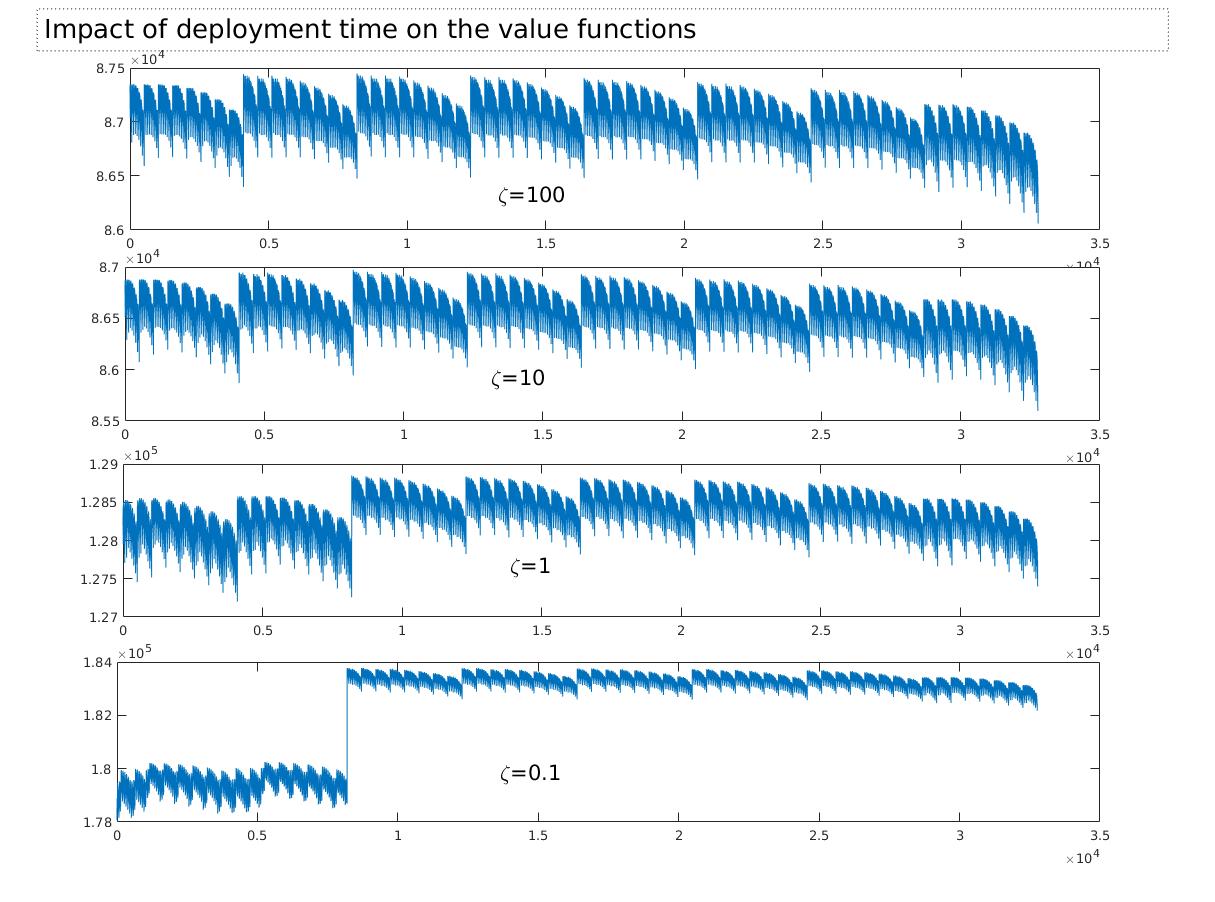}}
		{\includegraphics[width=20em]{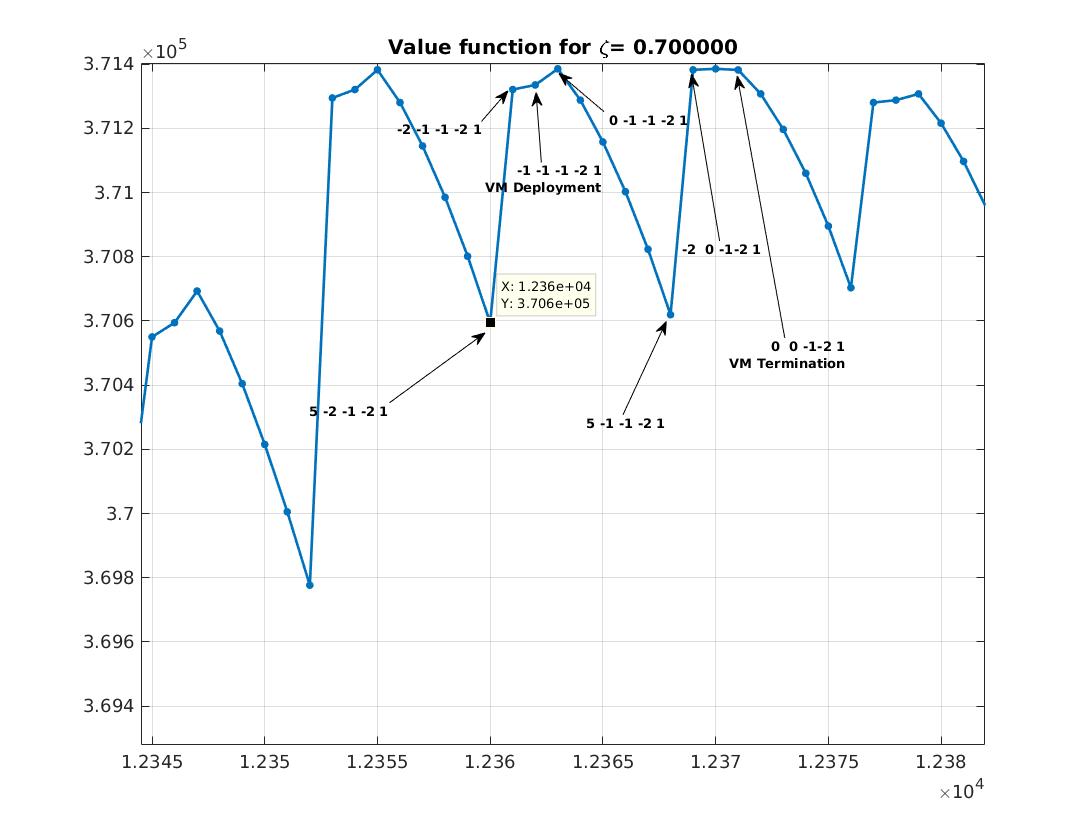}}
		\end{align*}
		\caption{\sl\small
			Left: Value function - impact of $\zeta$ is explored.
			Right: Value function - detailed view on the impact of $\zeta$.
		Vertical axis shows the value function and the horizontal axis enumerates the states in all cases. }\label{figz2}
	\end{center}
\end{figure}

The impact of the deployment delay on the value function is depicted in Figure~\ref{figz2}. It is apparent (Figures~\ref{figz2}(left)) that the “crest” type behavior repeats itself due to the same rationale as before, with the exception of additional perceptible deployment states ($q=-1$). Obviously, when the deployment rate is high (meaning negligible deployment delay) these states are almost unobserved. Furthermore, when the deployment rate is high the results resemble the results attained for $\zeta=0$. However, when the deployment rate is low (i.e., long implementation delay) the value function of the two leftmost graphs which correspond to the cases $q_5=-2$ and $q_5=-1$, i.e., idle and deploying, are significantly lower than those of the other graphs. As in the case of $\zeta=0$ the reward for admitting a task was attained on admission. However, in
contrast to $\zeta=0$ for which the deployment time is negligible, now the deployment time affects
$V$, specifically the higher the deployment time, the higher the effect on the value function.  Note that the decreased value function is due to the delay imposed on the waiting and incoming tasks and the increased future tasks’ rejection probability.

In order to better understand the impact of the deployment delay, Figure~\ref{figz2}(right) concentrates on a specific delay, namely $\zeta=0.7$ (relatively high deployment delay). Let us focus on states in which the loads of all VMs are fixed, varying the number of tasks occupying VM-1. Specifically, we examine three adjacent states: $q^1=\{-2,-1-1,-2,1\},q^2=\{-1,-1-1,-2,1\},q^3=\{0,-1-1,-2,1\}$, and denote their corresponding value functions as $V(1)$, $V(2)$ and $V(3)$, respectively. Under the specific parameters (and specifically deployment and arrival rates and , especially their ratio),  $V$ of having a deployed VM and despite the keep-alive cost, is higher than $V$ where VM is inactive. The marginal difference between states that have VM under deployment ($q^2$ ) and inactive ($q^1$) is smaller compared to the difference between $q^3$ (idle VM) and $q^2$ (under deployment). That is, $V(3)-V(2)>V(2)-V(1)$, indicating that the advantage of having a pending request for VM deployment was less valuable than that of having an already deployed empty queue, once $\zeta$ is small. Note that $V(2)-V(1)$ captures the value of taking the decision of VM deployment. As before, the more loaded the VMs, the lower the value; for example the value of state $\{1,-1-1,-2,1\}$, in which a single task occupies VM-1, is lower than that of state $\{0,-1,-1,-2,1\}$ in which VM-1 is deployed but idle. More radical changes can be seen between $V$ of state $\{5,-1,-1,-2,1\}$, in which VM-1 is fully loaded, and that of state $\{0,-2,-1,-2,1\}$. Recall that the reward  for all the extra admitted tasks has already been obtained on admission. For comparison, observe the three adjacent states $q^4=\{-2,0-1,-2,1\},q^5=\{-1,0-1,-2,1\},q^6=\{0,0-1,-2,1\}$, which are different from the previous triplet by that the second VM state changed from $-1$ to $0$, i.e., having VM-2 deployed idle and disposed to accept new tasks. Obviously, the need for a ready unloaded VM now is less acute than before, i.e., the marginal contribution of an additional ready VM is less acute than before. Indeed, as is apparent in the figure, there is no real value difference between the three states. Further, their value function is even slightly lower than $V$ of state $q^3$, i.e., the implementation of a VM when there is already an idle VM ready to accept tasks slightly degrades the value function due to the expected keep-alive costs. Following the decision mechanism one can see that indeed VM $q^6_5$ is marked for termination if its only packet is served and it becomes empty before any other event. (recall that decisions are taken only as a consequence of an event hence the VM termination must be triggered by a service completion event and cannot be done afterwords then the queue was already empty).

%%%%%%%%%%%%%%%%%%%%%%%%%%%%%%%%%%%%%%
\subsection{Threshold-type structure of the optimal policy}
%%%%%%%%%%%%%%%%%%%%%%%%%%%%%%%%%%%%%%
In this subsection we give some insight into finding optimal policies (policies that maximize the expected utility). In particular, we concentrate on the structure of the optimal policies trying to define thresholds such that below such a threshold the system takes one action while above it, it takes a different action. For example, the decision maker keeps admitting tasks to a VM only below a number of tasks occupying this VM and above this threshold it will either assign new incoming tasks to a different VM (possibly new one) or reject them. The motivation for identifying threshold policies stems from the fact that on many systems, and in particular queueing systems, threshold policies are optimal or nearly optimal. We mainly concentrate on actions which result in the deployment or termination of a VM and on the load balancing policy on which VM to place an admitted task.

In order to define threshold policy, we first define state domination. Consider two states $a$ and $b$, with queue vectors denoted by $q^a$ and $q^b$.

\begin{footnotesize}
	{\small
		\begin{definition}[State domination]\label{def4}\hfill\newline
			We define that state $a$ \textit{dominates} state $b$ if and only if states $a$ and $b$ have the same number of idle,  deployed and under deployment queues, and $q_i^a\geq q_i^b,\;\;\;\forall q_i^a>0,\; i\in\{1,\ldots,\bn\}$. We denote such domination by $q^a\succeq q^b$.
		\end{definition}	
		\normalsize}
\end{footnotesize}

The following defines thresholds in build (VM deployment), scheduling, and destroy (VM termination):
\begin{footnotesize}
	{\small
		\begin{definition}[Threshold policies]\label{def2}\hfill\newline
			\vspace*{-20pt}
			\begin{itemize}
				\item Optimal threshold policy $\pi^\bb$ exists if a deployment of previously inactive VM at state $q^a$ means deployment is also optimal at all states $q^b$ such that $q^b\succeq q^a$
				\item Optimal threshold policy $\pi^\bd$ exists if an optimal termination queue at state $q^a$ means termination is also optimal at all states $q^b$ such that $q^a\succeq q^b$.
				\item Optimal threshold policy $\pi^\bu$ exists if the optimal load balancing policy in state $q^a$ is the same as the optimal scheduling policy for all states $q^b$ such that $q^b\succeq q^a$ and $q^b_i=q^a_i$.
			\end{itemize}
		\end{definition}
		\normalsize}
\end{footnotesize}
We identify the existence of threshold policies both by observation and by using the following analytical result which states the monotonicity of the value function:
\begin{lemma}[Value function domination]\label{lem:dom}
	For any $q^a\succeq q^b$ it holds $V(q^a)\leq V(q^b)$.
\end{lemma}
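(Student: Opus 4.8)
\noindent\emph{Proof idea.} The plan is to run value iteration. Uniformize the continuous-time chain behind \eqref{eq:b} with a common rate $\Lambda\ge\lambda+\bn\mu+\gamma$, so that the service rate $\mu$ of an empty or inactive queue becomes a self-loop and every state gets the same normalization $1/(\Lambda+\gamma)$; write $T$ for the resulting Bellman operator, so that $V=\lim_n V^{(n)}$ with $V^{(0)}\equiv 0$, $V^{(n+1)}=TV^{(n)}$, the running cost entering $V$ with a sign that penalizes load. Since the property in question is closed under pointwise limits, it suffices to show $T$ preserves it: if $W$ satisfies $W(q^a)\le W(q^b)$ for all $q^a\succeq q^b$ with the same number of idle queues, so does $TW$. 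One observation streamlines the bookkeeping: under $q^a\succeq q^b$, ``same number of idle queues'' forces the \emph{same set} of idle queues ($q^a_i=-1$ makes $q^b_i\le -1$, hence $=-1$, and equal counts close the gap), so below $q^a_i=-1\iff q^b_i=-1$ and $q^a_j\ge q^b_j\ge 0$ on the rest.

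Now fix such a pair and compare $(TW)_{q^a}$ with $(TW)_{q^b}$ term by term. The running-cost term is larger at $q^a$: the delay part $h\sum_i q_i\eta(q_i)$ is $\succeq$-nondecreasing because $x\mapsto x\eta(x)$ is, and the keep-alive part $\kappa\cdot\#\{\text{active queues}\}$ is identical by the observation above; this pushes $(TW)_{q^a}$ down, as wanted. For the arrival term, every admissible action at $q^a$—schedule into an active non-full queue $i$, deploy an idle queue $i$ and schedule into it, or reject—is admissible at $q^b$ as well (using the common idle set and $q^b_i\le q^a_i$), and the successors satisfy $q^a+e_i\succeq q^b+e_i$, $q^a+2e_i\succeq q^b+2e_i$, $q^a\succeq q^b$, each time with a common idle set; so each $q^a$-option is dominated by the matching $q^b$-option and the maxima compare correctly (the all-full case collapses onto rejection). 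For the departure terms one checks queue by queue: in every configuration other than ``$q^a_i=1,\ q^b_i=0$'' the relevant successors again dominate with a common idle set—e.g.\ if $q^a_i\ge 2$ and $q^b_i=1$ then $q^a-e_i\succeq q^b-e_i$, so the $q^a$-side (a plain $W_{q^a-e_i}$) lies below the keep branch inside the $q^b$-side maximum.

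The single delicate configuration is $q^a_i=1$, $q^b_i=0$, where queue $i$ contributes $\mu\max\{W_{q^a-e_i},\,W_{q^a-2e_i}-\psi\}$ at $q^a$ but only the self-loop $\mu W_{q^b}$ at $q^b$. The keep branch is harmless ($q^a-e_i\succeq q^b$, common idle set), but the destroy branch needs $W_{q^a-2e_i}-\psi\le W_{q^b}$, and here $q^a-2e_i$ has queue $i$ torn down while $q^b$ keeps it alive—so the two states \emph{differ} in their number of idle queues and the hypothesis is unavailable directly. This is the main obstacle. I would meet it by carrying a second property through the induction, namely $W(s)-\psi\le W(s+e_i)$ whenever $s_i=-1$ (an empty-but-alive queue is worth at least a torn-down one net of the destroy fee); combined with $W_{q^a-2e_i}\le W_{q^b-e_i}$ (domination, now with a common idle set) this gives exactly $W_{q^a-2e_i}-\psi\le W_{q^b-e_i}-\psi\le W_{q^b}$. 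Propagating this second property through $T$ is again a term-by-term estimate: every term of $(TW)_s-(TW)_{s+e_i}$ is $\le\psi$ by the two properties at the appropriate successors, except the running-cost term, which adds the extra keep-alive charge $\kappa$ of queue $i$; summing the rates yields $(TW)_s-(TW)_{s+e_i}\le\frac{\Lambda}{\Lambda+\gamma}\psi+\frac{\kappa}{\Lambda+\gamma}$, which is $\le\psi$ precisely when $\kappa\le\gamma\psi$, i.e.\ when the destroy fee is not dwarfed by the present value of a perpetual keep-alive stream. (Outside such a balance the auxiliary property can fail, so a condition of this type should be regarded as implicitly in force.) Finally, letting $n\to\infty$ transfers the domination property to $V$, which is the claim.
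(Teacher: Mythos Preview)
Your approach is genuinely different from the paper's. The paper does not pass through value iteration at all; it gives a one-shot sample-path coupling. Two copies of the system are run on a common probability space, the copy started at $q^a$ following its optimal policy $\pi^*_a$ and the copy started at $q^b$ mimicking every action of the $a$-copy (same schedule/reject, same build/destroy, departures coupled). One then checks pathwise that the $b$-copy collects at least as many rewards, is fined no more often, and accrues no more delay, keep-alive, build, or destroy charges than the $a$-copy up to the coupling time $\tau$ at which the two states coincide; after $\tau$ the discounted continuations agree. This yields $V(q^a)=J^{\pi^*_a}(q^a)\le J^{\pi^*_a}(q^b)\le V(q^b)$ directly, with no induction, no auxiliary inequality, and no side condition on the parameters.

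The place where your argument weakens the statement is exactly the destroy branch you isolate. To compare $W_{q^a-2e_i}-\psi$ against $W_{q^b}$ across idle-set boundaries you bring in the auxiliary inequality $W(s)-\psi\le W(s+e_i)$ for $s_i=-1$, and you then discover that propagating it through $T$ forces $\kappa\le\gamma\psi$. That is a hypothesis the lemma does not carry, and your parenthetical that it ``should be regarded as implicitly in force'' is not supported by the statement; as written you have proved a strictly weaker claim. The coupling sidesteps this because it never has to compare a torn-down state against a kept-alive state through a single Bellman step: the comparison is cumulative over the whole path, and the $b$-copy simply executes the same destroy at the same instant as the $a$-copy, so the $\psi$- and $\kappa$-accounts stay aligned. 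If you wish to rescue the value-iteration route unconditionally you would need a stronger joint invariant that also controls one-step differences across idle sets (not just within a fixed idle set), and it is not clear one is available; the coupling is the cleaner instrument here.
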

The intuition behind this Lemma is quite straightforward; as previously explained, the reward for admitting a task was attained on admission, hence after admitting a task, it is only a ”burden” on the value function, hence the more tasks are present in a VM the lower the value. This monotonicity is clearly depicted in Figure~\ref{figz2}(right). For example, the four states to the left of state $\{5,-1,-1,-2,1\}$ which correspond to states $\{4,-1,-1,-2,1\}$,$\{3,-1,-1,-2,1\}$,$\{2,-1,-1,-2,1\}$ and $\{1,-1,-1,-2,1\}$ have a decreasing number of tasks on VM-1 and the same number of tasks on all other active VMs, their value gradually increased accordingly. The formal proof to Lemma~\ref{lem:dom} appears in Appendix~\ref{sec:lemdom}.

Lemma~\ref{lem:dom} analytically states a structural property of the value function, which suggests the existence of threshold policy. Next we illustrate these threshold policies on our numerical results. Note that these threshold policies coincide with the general intuition previously explained. In particular, since the action $\bb$ is motivated by the intention to reduce future costs due to the loads on the active VMs which include delay costs and potential fines, the same action is expected to be even more essential in the dominating states in which the VMs are even more loaded.

\begin{table}[t]
	\begin{minipage}{.55\linewidth}
		\vspace{0.1in}
		\center
		\begin{footnotesize}
			{\footnotesize
				\begin{tabular}{l|l|l|l|l|l|l|l}
					
					\hline
					state &	$q_1$ & $q_2$ & $q_3$ & $q_4$ & $q_5$ & $\bb$ & $\bu$ \\ %  & comment \\
					
					\hline
					$q^1$ &	0 & -2 & 5 & 3 & 4 &0 &1 \\
					$q^2$ &	1 & -2 & 5 & 3 & 4 &1 &1\\
					$q^3$ &	2 & -2 & 5 & 3 & 4 &1 &1\\
					$q^4$ &	3 & -2 & 5 & 3 & 4 &1 &1\\
					$q^5$ &	4 & -2 & 5 & 3 & 4 &1 &4\\
					$q^6$ &	5 & -2 & 5 & 3 & 4 &1 &4\\
					\hline
				\end{tabular}
				\normalsize}
		\end{footnotesize}
		\vspace{0.1in}
		\caption{\scriptsize{Building and scheduling policy. System parameters were $\lambda=2,\mu=1.4,\zeta=0.1$, with cost set given by
				$\{r,f,\beta,\phi,h,\kappa\}=\{100,5,10,0,1,120\}$.}} \vspace{-0.3in}  \label{tab:1}
		%\end{table}
	\end{minipage}
	\begin{minipage}{.5\linewidth}
		%\begin{table}[t]
		\vspace{0.1in}
		\center
		\begin{footnotesize}
			{\footnotesize
				\begin{tabular}{l|l|l|l|l|l|l|l|l}
					
					\hline
					state &	$q_1$ & $q_2$ & $q_3$ & $q_4$ & $q_5$ & $\bb$ & $\bu$ &$\bd$ \\ %  & comment \\
					
					\hline
					$q^1$ &	0 & -1 & 5 & 3 & 4 &0 &1&0 \\
					$q^2$ &	1 & -1 & 5 & 3 & 4 &0 &1&0\\
					$q^3$ &	2 & -1 & 5 & 3 & 4 &1 &1&0\\
					$q^4$ &	3 & -1 & 5 & 3 & 4 &1 &1&0\\
					$q^5$ &	4 & -1 & 5 & 3 & 4 &1 &0&0\\
					\hline
					$q^6$ &	0 & 0 & 1 & 1 & 1 &0 &1&1\\
					$q^7$ &	0 & 0 & 1 & 1 & 2 &0 &1&1\\
					$q^8$ &	1 & 0 & 1 & 1 & 2 &0 &2&0\\
					$q^9$ &	1 & 1 & 1 & 1 & 2 &0 &1&0\\
					\hline
				\end{tabular}
				
				\normalsize}
		\end{footnotesize}
		\vspace{0.1in}
		\caption{\scriptsize{Building and scheduling policy. System parameters were $\lambda=4,\mu=1.4,\zeta=1.1$, with cost set given by
				$\{r,f,\beta,\phi,h,\kappa\}=\{100,60,100,0,1,40\}$,\vfill Load impact $\eta=\{1, 1.8, 2.5, 3.5, 4.5\}$.}} \vspace{-0.3in}  \label{tab:2}
	\end{minipage}
\end{table}
\begin{comment}
\iflong
The proof appears in Appendix~\ref{ap:p2}.
\else
The proof appears in the on-line version of the paper.
\fi
\end{comment}

Table~\ref{tab:1} depicts six different states, satisfying domination relation such that $q^6\succeq q^5\succeq q^4\succeq q^3\succeq q^2\succeq q^1$. The table also depicts the optimal action that should be taken based on our numerical evaluation. In particular, $\pi^\bb$ denotes the VM’s deployment action, where 1 in the table implies a deployment of a new VM, and 0 denotes no change. $\pi^\bu$ refers to the load balancing action and in particular on which VM to allocate a new admitted task.
The table clearly depicts a threshold policy for both actions. Regarding deployment (column $\bb$ in the table), state $q^1$ with idle VM-1 indicates that no action is required (no need for additional VM deployment), all other states with increasing number of tasks allocated to VM-1 necessitate a VM deployment. Note that determining the optimal thresholds, even for this seemingly simple case, is quite complicated. On the one hand, the decision maker needs to keep the number of deployed VMs small in order to minimize the keep-alive costs. On the other hand, it needs to take into account many other parameters such as the deployment delay, the expected tasks’ arrival and service rates, the delay costs associated with the number of deployed VMs, etc. As can be seen in Table~\ref{tab:1}, despite the relatively high keep-alive costs, a single task on VM-1 was sufficient to trigger a VM deployment. Regarding the Load Balancer (column $\bu$ in the table), as long as VM-1 is the least loaded active VM, new admitted tasks will be assigned to it. As soon as VM-4 becomes the least loaded VM (state $q^5$), new admitted tasks are assigned to it. Recall that due to the holding cost ($h>0$) and the homogeneity of the VMs, the  Load Balancer equally disperses between the active VMs.

Table~\ref{tab:1} depicts $\pi^\bb$ and $\pi^\bu$. See that $q^6\succeq q^5\succeq q^4\succeq q^3\succeq q^2\succeq q^1$.
Table~\ref{tab:2} depicts the value function of selected states under different setup, and also examines the VM release termination action. Recall that the decision maker needs to take into account the tradeoff between keeping more active VMs with low anticipated delays vs. a lower number of VMs, reducing the keep-alive costs yet increasing the delay costs. Recall that the delay cost is a function of the constant $h$ multiplied by factors selected from the increasing sequence $\eta$ according to the number of tasks in each queue. For example, in the setup associated with Table~\ref{tab:2}, which utilizes the sequence $\eta=\{1, 1.8, 2.5, 3.5, 4.5\}$, the total holding cost per unit of time at state $q^5$ is equal to:
\[
\sum_ih_i(t)=(4*3.5+0+5*4.5+3*2.5+4*3.5)*h
\]
Note that in this setup $\eta$ implies very high penalization for highly loaded VMs.
Further note the first five states in the table II clearly depict a deployment threshold (related to state $q^3$). Interestingly, due to the high keep-alive cost, the optimal policy for state $q^5$ is to reject an incoming task rather than assigning it to one of the VMs, even though not all of them are saturated. The last four states ($q^6-q^9$) clearly suggest threshold policy for terminating a VM (related to state $q^8$), i.e., the optimal policy for state $q^7$ is VM termination (denoted by 0 on column $\bd$); the same termination policy is also valid to states $q^8$ and $q^9$ which dominate $q^7$ ($q^9\succeq q^8\succeq q^7$).

%\input{awsSim10}

%%%%%%%%%%%%%%%%%%%%%%%%%%%%%%%%%%%%%%%%%%%%%%%%%%%%%%
\section{AWS-based policy validation}\label{sec:AWS}
%%%%%%%%%%%%%%%%%%%%%%%%%%%%%%%%%%%%%%%%%%%%%%%%%%%%%%
In the previous section we utilized MATLAB simulations to understand the effects of several representative parameters on the overall value, and to explore the inter-relation between various sets of parameters and variables and their reciprocal effect on the value function. In this section, we demonstrate the feasibility of the scheme on Amazon Web Services (AWS) \footnote{AWS provides on-demand cloud computing platforms on a paid subscription basis.} based settings. Specifically, we ran numerous experiments on AWS, extracted the required parameters and assessed the performance based on the formulation described in Section~\ref{sec:Ores}.

The common scaling policy which is also adopted by the default AWS scaling mechanism is a threshold based policy and concerns the deployment and termination of VMs, i.e. deployment and termination of VMs are according to predefined thresholds~\cite{awsAS}. These thresholds are associated with certain system-related metrics, such as average load on all deployed VMs, most loaded VM, least loaded VM, CPU utilization, etc. While being robust, effective and not less importantly simplistic, predefined threshold mechanisms significantly limit the opportunities for cost optimization. Specifically, the AWS threshold based scaling mechanism imposes rigidity which not only requires the user to determine the scaling thresholds \emph{a-priori} but more importantly provides no scaling tools which respond according to the detailed states of each VM, hence limits the range of attainable solution. Furthermore, even though the AWS threshold based scaling mechanism provides a wide range of flexibility, allowing the user to program its own thresholds \emph{a-priori}, it restricts the thresholds to rely on the system parameters available through the AWS console (which represent maximum or average load for all active VMs), highly limiting the user’s flexibility. Both reasons prevented us from deploying the complete suggested scheme on AWS and forced us to utilize a hybrid scheme which interlace the AWS platform with adjacent offline Matlab value-computations. Specifically, throughout this section we rely on qualitative validation, i.e., we extract all system states, parameters, statistics and performance values in real-time from AWS throughout each evaluation test, yet the eventual cost was computed offline.
\begin{wrapfigure}{R}{0.5\textwidth}
	%	\begin{center}
	%\begin{align*}
	\includegraphics[width=20em]{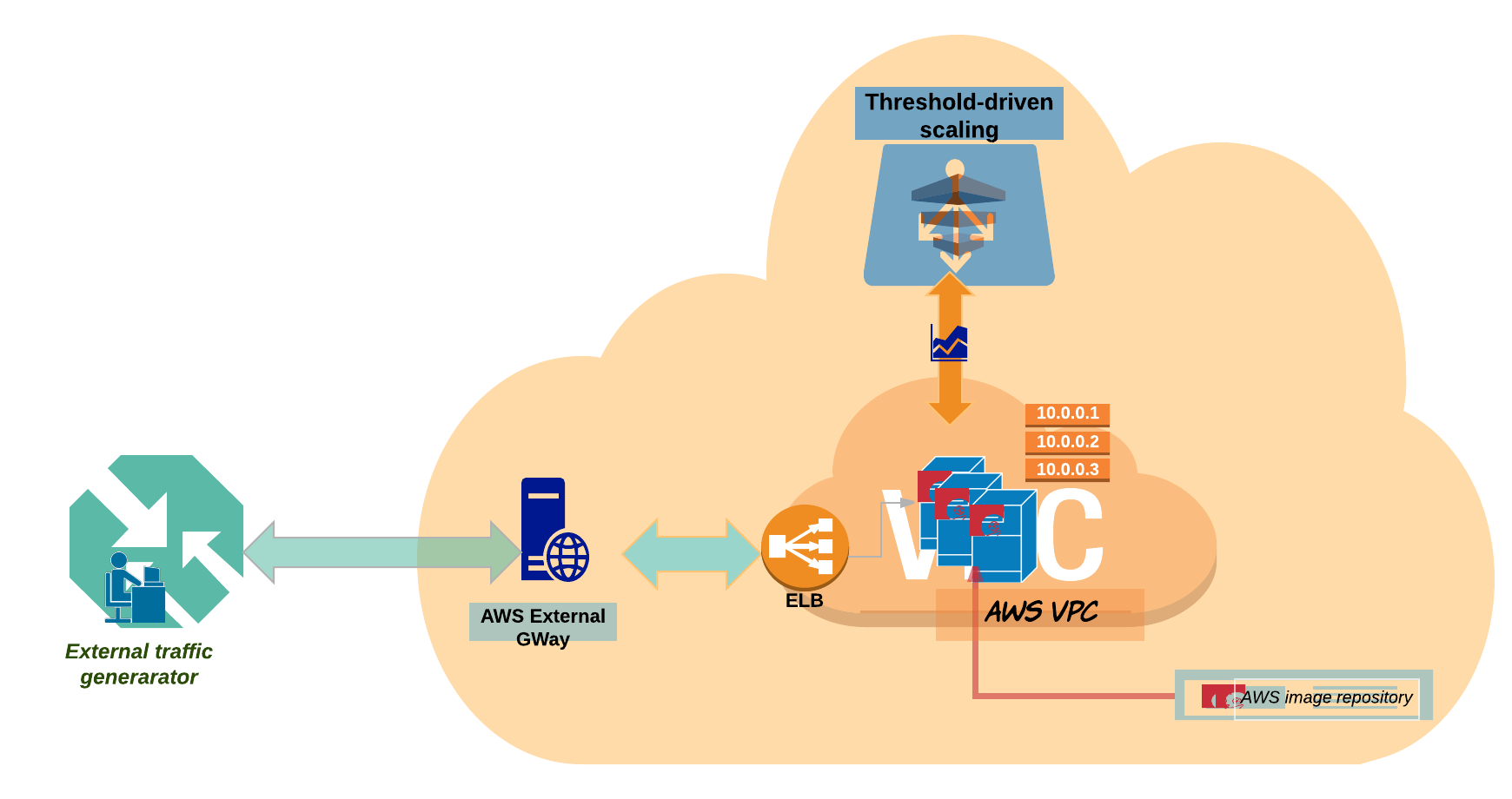}
	%\end{align*}
	\vspace*{-30pt}
	\caption{\sl\scriptsize
		AWS validation set-up scheme}\label{fig33}
	%	\end{center}
\end{wrapfigure}
The setting which we implemented on AWS is schematically demonstrated in Figure~\ref{fig33} and is summarized as follows:
\begin{itemize}
	\item \textbf{Traffic generator}.
The traffic generator resides on a local computer outside the AWS premises. Specifically, we implemented a simple JAVA HTTP client on a local computer which periodically sent HTTP task requests to the AWS gateway (GW) attached to the designated Elastic Load Balancer (ELB). In this validation scenario, we mainly focused on computational oriented tasks, rather than on NFV related tasks. In particular, each CPU in a deployed VM performed mathematical operations, e.g., matrix inversions. These tasks can fall well within a variety of jobs needed for image processing.
	\item \textbf{AWS EC2 virtual machines.}  Each VM ran an HTTP server. Once an HTTP request arrived to the server, it triggered a computational task such as matrix inversion, that loaded one of the VM CPU cores for a few seconds. We used machines of the type ” t2.xlarge” which contained 4 cores, hence were able to concurrently process up to $4$ tasks without noticeable slowdown in performance. Each task was executed by a thread running on a separate core. All VMs were configured such that in case this predefined limit number of $4$ running threads was reached, the incoming new tasks were rejected.
	\item \textbf{AWS Elastic Load Balancer (ELB)}. We utilized the AWS built-in load balancer, which equally disperses the incoming HTTP requests by the round robin (RR) method, regardless of the load level at each VM.
	\item \textbf{AWS Auto Scaling configuration.} As previously mentioned, the AWS AutoScale monitors a certain metric; once this metric crosses an upper threshold, AWS will automatically launch a new copy of our VM, and register it in the ELB, reducing the load at the currently running VMs. We utilized average CPU load as our metric for the upper threshold. The Auto Scaling configuration also specifies the maximal number of VMs and the thresholds for opening/closing a VM; it includes cloud-watch alarms which signal to deploy or to terminate a VM according to the policy. Each VM also listened on another port for "keep alive messages" from the ELB. %, to avoid these messages from adding load to the VMs (FIXME*** another   => not adding load?***) . Note that the alarms’ statistics and the logs were retrieved by a specially designated environment~\cite{boto}.\mmme{It had to be on another port to avoid TCP collisions. We can remove the load explanation, we do not need it }
\end{itemize}

%\FMO{***} -  \FMO{is this part of the setup???} \mmme{Yes, was a part of Java server implementation. The limit was 4.  }

As previously mentioned, AWS scaling mechanism relies on threshold based policy, i.e., a user can control when to scale-in or scale-out by choosing one or more system parameters (e.g., load) and by determining a threshold according to these system parameters for deployment and termination of VMs.
As was shown in previous sections by observation of numerical results, the optimal policy disclosed by the MDP is a threshold based policy (at least with respect to some of the parameters examined).
Accordingly, our objective was to understand the effect of these threshold values on the value attained, and to compare it with the optimal thresholds attained by the MDP formulation. In order to evaluate the values attained by each set of thresholds experimentally, and in order to formulate the MDP and solve the corresponding Bellman equation for determining the optimal value analytically, we need to acquire the required system parameters. For example, we need to know the effect of task loads on the performance, i.e., to acquire the statistical properties of task processing-time-distribution under different VM load levels, we need to obtain VM deployment time, etc. This data was extracted by the AWS Cloud-Watch mechanism throughout each experiment and fed back to the MDP formulation. %Obviously, some of these performance values are policy dependent, i.e., vary under different policies (e.g., \FMO ***such as…***) .
\begin{comment}
\FMO{*** Accordingly, we implemented an iterative mechanism (Figure~\ref{figCL} ***FIXME Figure 1***), in which the performance values extracted in the first iteration were fed back to the MDP procedure which in turn determined the policy for the next iteration and were fed back to the ELB as new threshold values, and so on and so forth, until there were no policy changes between iterations.} (\FMO{Frankly, I do not think that this is what we did}). In most experiments a small number of iterations was sufficient to converge to the optimal policy.  ***\FMO{check if any of this was done} \mmme{No more than 1-2 iterations was done (as we talked)}
\end{comment}
We evaluated the system under moderate traffic intensity (Arrivals were about 300 tasks per hour, while service rates at each VM were about 50 tasks per hour). The AWS VM deployment and termination thresholds were set with respect to CPU utilization (i.e., the average CPU utilization on all deployed VMs), examining the effect of various thresholds for deploying (scale-out) and discharging (scale-in) VMs, on the value function. Each pair of thresholds was examined for a long duration %(***\FMO{XXX hours?***})\mmme{2 hrs, but can we omit this?} 
to get sufficient statistics, extracting all the required system parameters. The associated values were computed offline based on the collected traces. Since the main goal of this deployment is a proof of concept, and due to budget constraints, we examined only several thresholds. The set of coupled threshold values examined are given in Table~\ref{tab:4}. For tractability we indexed the paired values from 1 to 12 (e.g., pair number 4 denotes $40\%$ CPU-utilization and $60\%$ CPU-utilization for deploying and discharging a VM, respectively). We utilized the Auto Scaling Instance Protection scheme offered by AWS (\cite{scprot}), ensuring that at least one VM is active at all times, even if its load measure is below the discharge value. It is important to note that the AWS Elastic Load Balancing Connection Draining mechanism ensures that VM instances are removed from service progressively (\cite{drain}). Specifically, after initiating a VM termination, the Load Balancer will allow existing, in-flight requests to complete, but will not send any new requests to this instance. The instance termination will be eventually finished only when no tasks are left. We charged the user for the termination period.

\begin{table}[t]
	\vspace{0.1in}
	\center
	\begin{footnotesize}
	{\footnotesize
	\begin{tabular}{l|l|l|l}
		%1     2     3     4     6     7     8    10    11    12

		\hline
		\backslashbox{Exp. \#}{Threshold}
		 &Termination	 & Deployment \\ %  & comment \\
		
		\hline
		1 &	20 & 40 \\
		2& 25 & 45&\\
		3 &	30 & 50 \\
		4 &	35 & 65 \\
	%		5&&	45 & 55 &\\
		5&	50 & 70 \\
		6&	60 & 80 \\
	7&	70 & 90 \\
	%		9&1&	75 & 95 \\
			8&	10 & 80 \\
			9 & 10& 20 \\
				10&	30 & 70 \\
		%	MATLAB &1& 49.8 & 70.1\\
		%	MATLAB &2& 30 & 56.5\\
		%	MATLAB &3& 18.2 & 30\\
		%	MATLAB &4& 31.2 & 52.1\\
		\hline
	\end{tabular}
\normalsize}
\end{footnotesize}
	\vspace{0.1in}
	\caption{\scriptsize{Scaling thresholds. AWS thresholds were tested for all 4 experiment sets. The first threshold is for VM deployment while the second threshold stands for termination, measured in average $\%$ of CPU occupancy}} \vspace{-0.3in}  \label{tab:4}
\end{table}

%%%%%%%%%%%%%%%%%%%%%%%%%%%%%%%%%%%%%%%%%%%%%%%%%%%%%%%%%%5

We examine the value of the different thresholds under four different sets of costs. The parameters used in each set are summarized in Table~\ref{tab:Costs} (the exact costs of all the four sets are given in Figure~\ref{fig34} caption). The results are summarized in Figure~\ref{fig34}.

\begin{table}[t]
	\vspace{0.1in}
	\center
	\begin{footnotesize}
	{\footnotesize
\begin{tabular}{|l||*{5}{c|}}\hline
\backslashbox{Set \#}{Property}
&Reward ($r$)& Penalty ($f$) & Deployment cost ($\beta$) &keep-alive cost ($\kappa$) \\\hline\hline
Set I &Average&Low &Moderate &High  \\\hline
Set II &Average& High &Moderate &Moderate\\\hline
Set III &Average& High &Low &Low\\\hline
Set IV &Average& High &Low &Moderate\\\hline
\end{tabular}
\normalsize}
\end{footnotesize}
	\vspace{0.1in}
	\caption{\scriptsize{Qualitative summary of the four tested scenarios}} \label{tab:Costs}
\end{table}

 \begin{figure}[h]
	\begin{center}
		\begin{align*}
	&	{\includegraphics[width=15em]{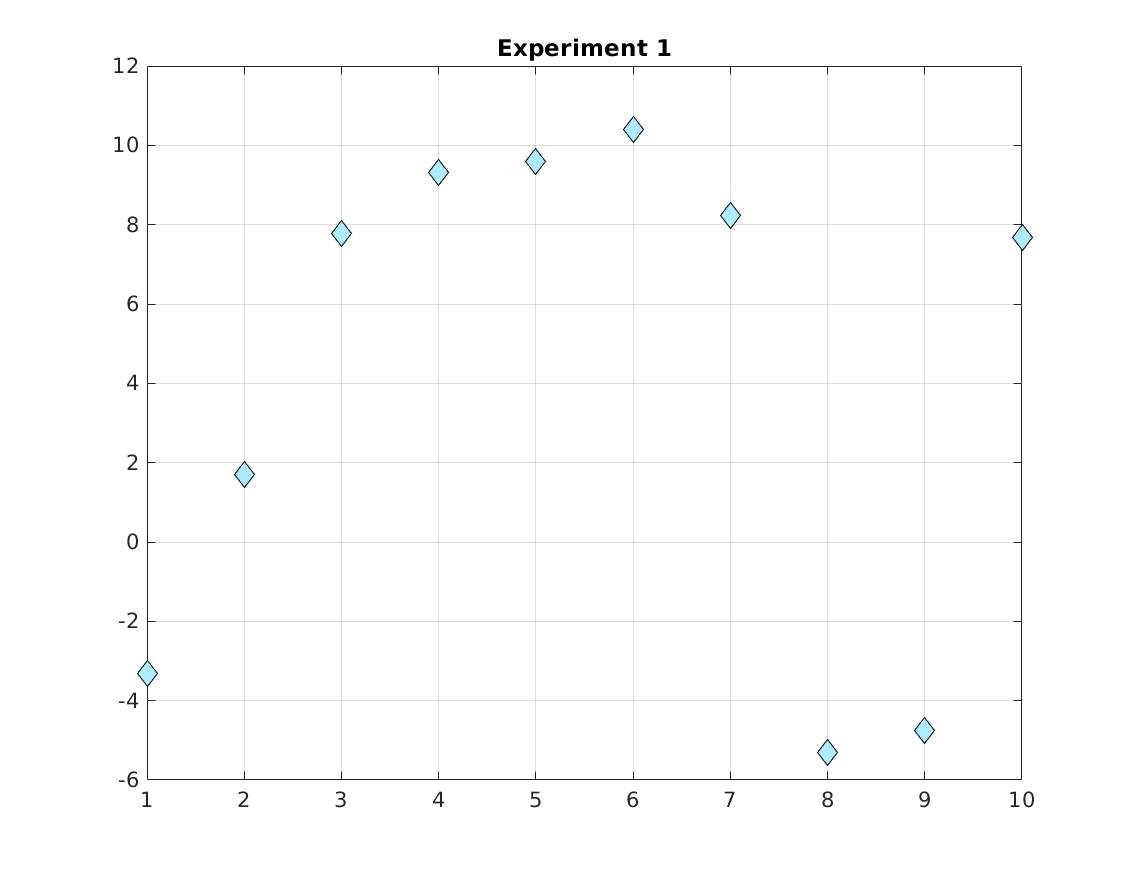}}\; 	{\includegraphics[width=15em]{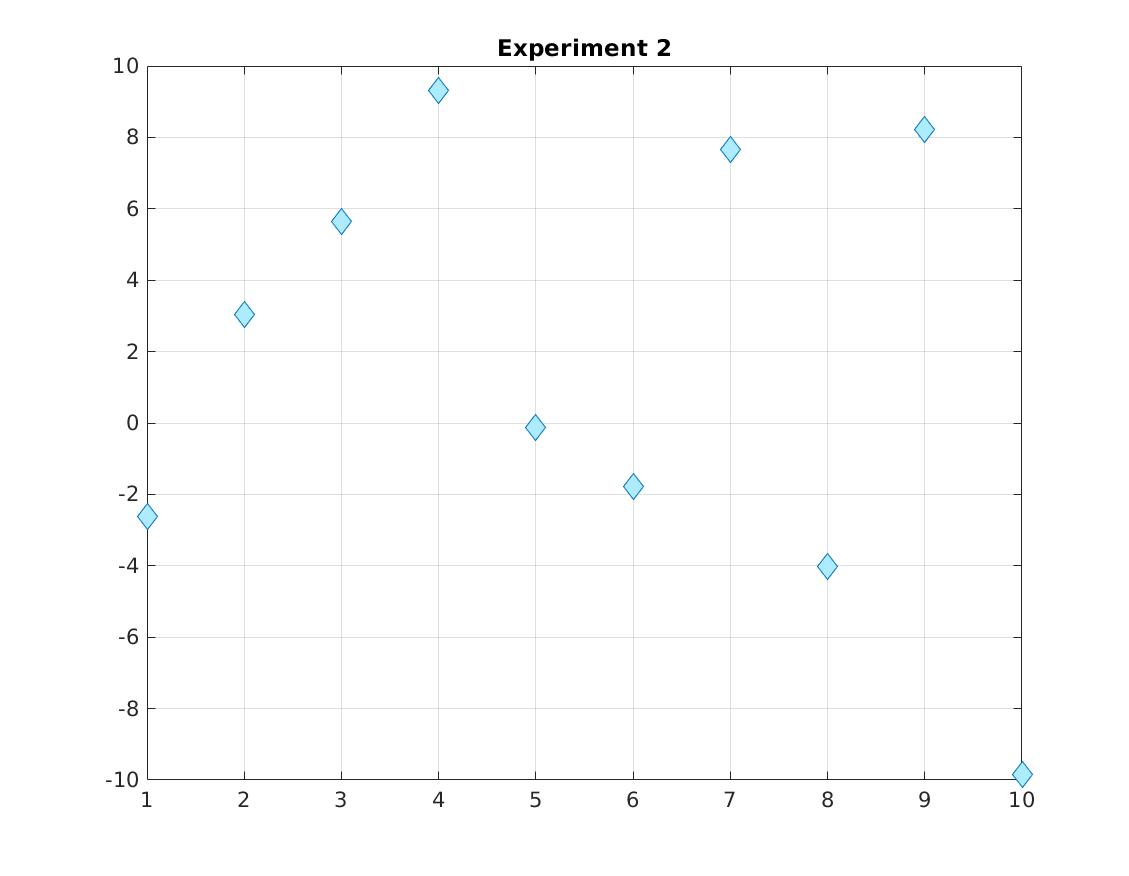}}\\
	&	{\includegraphics[width=15em]{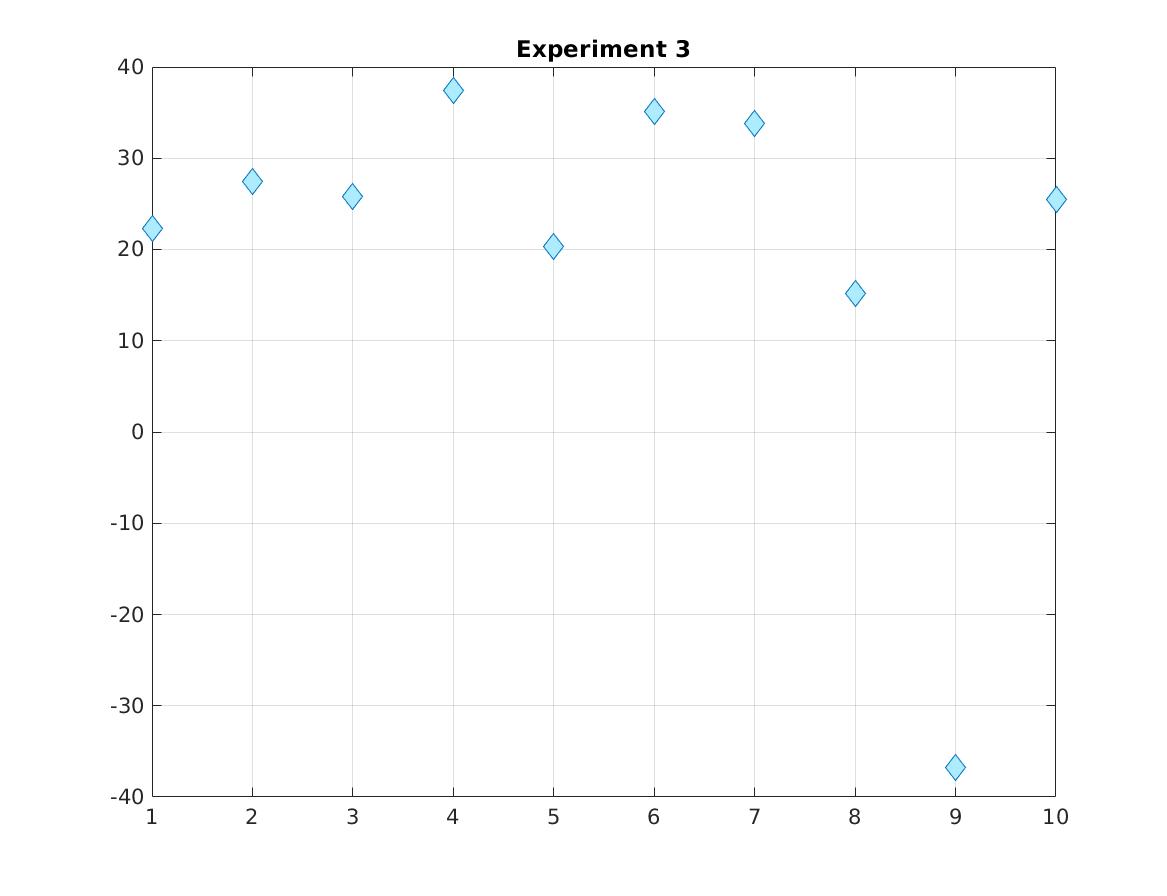}}\; 	{\includegraphics[width=15em]{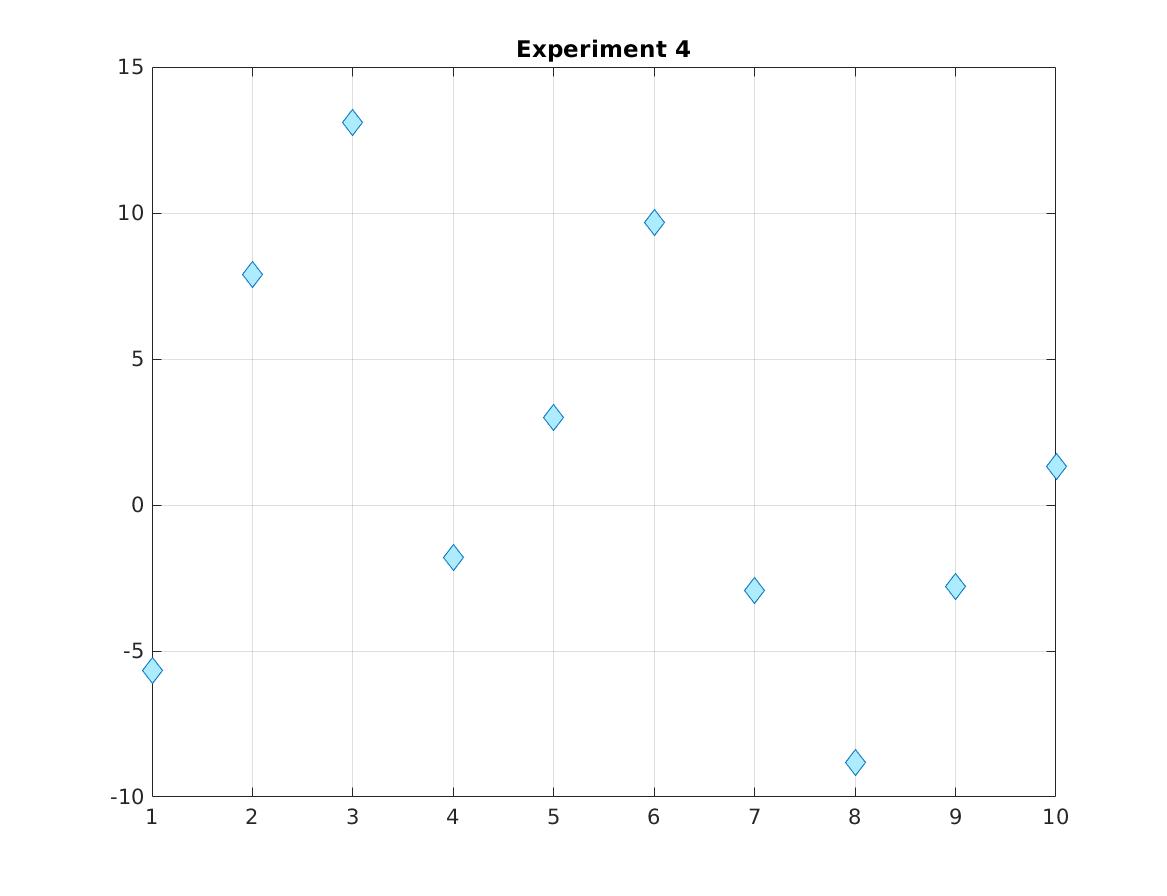}}
		\end{align*}
\caption{\footnotesize
AWS validation according to the set of thresholds is listed in Table~\ref{tab:4}. The four costs’ set which were utilized to evaluate the average value attained for each pair of threshold were: $\{r,f,\beta,\phi,h,\kappa\}=\{100,60,100,0,0,122\}$, \{100,300,100,0,0,122\}$, \{100,300,10,0,0,10\}$ and $ \{100,300,10,0,0,70\}$ for set 1 to 4, respectively.}
		\label{fig34}
	\end{center}
\end{figure}

Unsurprisingly, Figure~\ref{fig34} clearly depicts that the average value attained for each pair of thresholds is highly dependent on the set of costs. Accordingly, there is no unique pair of thresholds which is preferable to all sets. Such observation strengthens the claim that determining the thresholds \emph{a priori} compromises on performance, and implies that an adaptive mechanism, which is fed with the costs set, learns the system parameters (e.g., expected arrival rate, expected service time, expected VM deployment time, etc.) and sets the thresholds accordingly, is superior. Note that choosing inappropriate thresholds can not only result in non-optimal attainable value but can result in negative revenue (average system loss) rather than average income. For example, note that threshold set number $4$ which is the best out of all the inspected threshold pairs, for cost sets $2$ and $3$ and acceptable for cost set number $1$, resulted in revenue loss for cost set number $4$, which can be attributed to the higher percentage of rejected tasks.
Obviously, when the holding cost is high and the delay cost factor ($h$) is low, it is preferable to keep as little active VMs as possible, and deploy them on an on-demand basis (e.g., both sets $1$ and $2$ with threshold pair $8$, namely low termination threshold and high deployment threshold, keep VMs active even when the load is low which results in revenue loss due to the unnecessary holding charges). When VM deployment cost is low, deploying and terminating of VMs on a per task basis can be cost effective, depending on the arrival rate, i.e., a policy which deploys a VM upon arrival of a task and terminates the VM upon its completion can be rewarding. For instance, cost set 3 with threshold pair 8 and low holding and deployment costs.
Even though, as previously explained, there is logic for the attained revenues for each set of threshold, it can easily be seen that it is hard to predict \emph{a priori} what will be the attained value for a set of costs and a pair of thresholds. All the more so, it is hard to predict the optimal thresholds.
In order to compare the results to the optimal thresholds for each set of costs, we utilized the procedure described in Section~\ref{sec:mdp}, feeding the Bellman equation described in~\eqref{eq:b1}, all the parameters captured by the AWS (similar to the Matlab results attained in the previous section with inputs which were generated by the AWS). The results are presented in Table~\ref{tab:Costs1}. % \FMO{???}\mmme{this par. has too many fixmies and should be fixed :) }

%***\FMO{update the MATLAB results}***

\begin{table}[t]
	\vspace{0.1in}
	\center
	\begin{footnotesize}
	{\footnotesize
\begin{tabular}{|l||*{5}{c|}}\hline
&Termination threshold& Deployment threshold& Average Value \\\hline\hline
Set I & 49.8&70.1 & 18.7 \\\hline
Set II &30 &56.5 &  12.4\\\hline
Set III &18.2 &30 & 49.12\\\hline
Set IV &31.2 & 52.1& 27.12\\\hline
\end{tabular}
\normalsize}
\end{footnotesize}
	\vspace{0.1in}
	\caption{Summary of tested experiments along with the numerically attained average reward per task} \label{tab:Costs1}
\end{table}

As expected, the revenue attained by the MDP based procedure suggested herein is superior to the one attained by the best pre-defined thresholds on AWS. Specifically, we can see revenue gains ranging between $125\%$ and $200\%$ ($175\% , 125\% , 125\%$ and $200\%$ for sets $1$ to $4$, respectively). %Furthermore, even though the optimal thresholds attained by the numerically derived policy which leverages the Bellman equation were not tested on the AWS platform, we see some resemblance between them and the best ones experimented on AWS for each costs set separately.\mmme{Can we omit this sentence?} 
It can be seen that even though the pair of thresholds examined on AWS which is closest to the optimal derived ones is not necessarily the thresholds pair that provided the best revenue, the trend is similar, i.e., both deployment and termination thresholds are in the same vicinity and the gap between the two thresholds is comparable.

Although, as previously shown, the threshold based autoscale mechanism implemented on AWS maintained the trends of the optimal solution attained by the MDP solution, there are some disparities between the performance of the two. These differences cannot be bridged by trying to mimic the behavior of the optimal solution on the AWS platform, as they are associated with differences which are inherent in the mechanisms adopted by the two approaches. Note that such differences confine the AWS autoscale policy to suboptimallity. The two major essential differences are: (i) AWS Auto Scale is myopic as far as the maximal number of VMs is concerned and always performs scaling according to predefined thresholds. (ii) AWS Auto Scale uses the same thresholds regardless of the number of active VMs. Additional differences between the two schemes are given in Table~\ref{tab:3}.
\begin{table}[t]
	\vspace{0.1in}
	\center
	\begin{footnotesize}
		{\footnotesize
			\begin{tabular}{l|l|l}
				
				\hline
				$\nicefrac{method}{property}$ &	MDP Solution & AutoScaler of AWS\\ %  & comment \\
				
				\hline
				Detailed state policy &	Yes & Complex  \\
				%	&&custom implementation. \\
				%	\hline
				Awareness of VM's limit& Yes & No\\
				Awareness of deployment time &	Yes & No \\
				Load Balancing policy &	Optimal & RR only \\
				Scalability by VMs number&	BE solution is needed & Trivial \\
				Scalability by VM capacity&	BE solution is needed & Trivial \\
				Threshold alarm response & Immediate & Depends on user's budget \\
				%	&& \\
				\hline
			\end{tabular}
			\normalsize}
	\end{footnotesize}
	\vspace{0.1in}
	\caption{\scriptsize{Comparison of capabilities - policy driven from MDP and implementation on AWS. } \protect\footnotemark} \vspace{-0.3in}  \label{tab:3}
\end{table}
\footnotetext{ ELB can be immediately configured for new (not necessarily optimal) thresholds. On the other hand, our implementation for BE solution includes very careful complexity  design and scales well as long as the memory resources used to accommodate the state-space allow that. The latter, considering modern processing equipment, is never a bottleneck.}
 In order to illustrate these intrinsic differences, consider the following example where the policy for VM deployment is according to the average CPU load. Compare the two states: $q^a=\{-2,-2,4,4,4\}$ and $q^b=\{-2,-2,-2,-2,3\}$, where the limit number of tasks is $5$. The average occupancy of deployed VM in $q^a$ is $4$, while in $q^b$ it is equal to $3$. Accordingly, in the case where the deployment threshold is equal to $70\%$, there will be a deployment decision in $q^a$ and there will be no deployment decision in $q^b$. However, taking into account the non-zero deployment time, the immediate available space in $q^a$ is $3$ tasks while in $q^b$ it is equal to only $2$ tasks. Therefore, the probability of rejection is higher in state $q^b$. Hence, this policy implies a potential reward loss. Note that trying to devise a policy on AWS which takes decisions according to the detailed state is not sufficient for reaching optimality, as it requires the support of the ELB. Specifically, if the ELB scheduling method supports only Round Robin, the potential gains from supporting detailed state based policy, is quite limited.

\begin{remark}[Bellman equation for simplified state-space]
	The optimal policy found for the set of VMs represented by queues is designated to be applied to a corresponding queuing system. That is, the decisions are done according to the detailed states of all VMs rather than according to the average load at all VMs. A simple translation of each state to the load can be done, in order to apply the policy to AWS domain. This is a robust simplification, which provides an approximated load thresholds, yet serves as a good indication.
	Nevertheless, we also performed a research with specially tailored  Bellman equation, which were based on the simplified state space, such that adjustment to the AWS domain will be precise. We provide the details of the related methodology in Appendix~\ref{sec:s_BE}.
\end{remark}
%%%%%%%%%%%%%%%%%%%%%%%%%%%%%%%%%%%%%%%%%%%%%%%%%%%%%%%%%%5

%\end{document} 

\section{Related Work}\label{sec:related}  
In this section we bring a list of related works which deal with scaling and balancing in cloud computing. We also emphasize works addressing specific application realms, esp. NFV and scientific workflows.
Global point of view about scaling challenges in cloud computing was addressed in~\cite{buyya2010intercloud}. See also a review of scaling methods in~\cite{lorido2014review} and the references therein. 
A thorough discussion which accounts for many computational and cost aspects and brings specific examples of possible missions that can be offloaded to the cloud can be found in~\cite{schatz2011mpi} and in~\cite{furht2010handbook}. 
Dynamic scaling of web applications based on the number of login users to the application in a virtualized Cloud Computing environment is presented in~\cite{chieu2009dynamic}.
The recent review  on dynamic load balancing in the cloud (e.g.,~\cite{ghomi2017load}) provides a detailed coverage and splits the exiting techniques into categories. In particular, this work specifies,  general, application-oriented, network-aware category and workflow specific categories. Hence, the scope we address in this paper is highly relevant.
Challenges of VNF scheduling by means of software-defined networks (SDN) were introduced in~\cite{riera2014virtual}. 
Technical details of how the VNF could be deployed are discussed in~\cite{scholler2013resilient}.
Integration of NFV and SDN was also addressed in~\cite{costa2015sdn} in the context of 5G mobile core networks. In~\cite{yoshida2014morsa}, joint optimization of NFVI resources under
multiple constraints results in approximate algorithm which aims at improving the
capability of NFV management and orchestration. This work provides static placement of NFV-related resource.
Near optimal NP-hard placement problem is approximately solved by linear algorithm in~\cite{cohen2015near}. 
The problem of VNF scheduling aimed to optimize performance latency, was treated by linear programming in~\cite{qu2016network}.

Offloading scientific workflows to  a cloud have recently received intensive attention, see, e.g.,~\cite{zhou2016taxonomy},\cite{hoffa2008use},\cite{lin2011scheduling}. Specific examples include but not limited to cloud-based Neural Networks~\cite{teerapittayanon2017distributed}, processing of astronomy data~\cite{vockler2011experiences}, e-business workflows~\cite{xu2017near}. 

Some of the following works in queuing control can be seen as potentially applied to cloud computing. In~\cite{mace20162dfq}, queue scheduling algorithm named 2DFQ
separates requests with different
costs and different size across different worker threads. It also extends to the case where the costs are variable or unpredictable.
Optimal scheduling in hybrid cloud environment was studied in~\cite{shifrin2013optimal}. The solution provided by the authors employs MDP as well.
A great deal of work provides joint results on queue scheduling and optimization. 
We mention works which deal with fixed number of queues or queue networks, e.g.~\cite{van2000price,neely2008fairness,feitelson1997theory} and works which take this number to the limit, e.g.~\cite{ganti2007optimal,harrison2004dynamic,puhalskii2000multiclass}. See also and references therein. This work, in contrary, treat the specific scenario, where the number of queues is finite but flexible, incorporating features of deployment cost and time and termination. %These features correspond to the VNF scheduling which we aim to optimize in the sense of overall cost minimization. 
Therefore, we developed a model which captures all these features and treat it by MDP.

%In order to validate policies found from MDP, we present an AWS-based simulation set-up. 
%which includes traffic generator of computational tasks which are handled over to the finite yet controllable number of VMs by elastic load balancer (ELB) enhanced by auto-scaling option.

%While VNF tasks are suitable candidates for the presented method, the latter can be equally utilized for any computational mission. 
\section{Conclusion}
In this work, we presented a scaling and load balancing mechanism by means of optimal stochastic control, in a complex setting where cloud users (applications) aim to optimize their overall costs. The overall cost depends on several cost parameters which can be classified into two basic cost types. The first one we categorize as service revenues which are associated with SLA, namely rewards for successfully admitted tasks, fines for rejections and reward reduction for low performance. The second one we categorize as provisional costs; these refer to the Operator’s expenditures towards the cloud provider; they include cost of VM deployment and termination and cost of having VMs on-line (keep-alive cost). The controlling agent which manages both the scaling procedure and the load balancing mechanism, receive as inputs various Key Performance Indicators (KPIs) such as the load on each deployed VM, and based on the anticipated loads, the expected performance and the set of costs determine whether to deploy or terminate a VM, and to which VM to allocate an arrival task. For example, based on the system state, expected load and related performance and the anticipated value (cost) the agent can decide to stop diverting tasks to a specific VM in order to empty it and terminate its operation without dropping any tasks on the way. We formulated the problem by Markov Decision Process and derived optimal policy which optimized the application’s total cost. The value function, the optimized cost and the optimal policy (which is a direct product of it), were thoroughly investigated numerically. In particular, we numerically explored the effect of various cost parameters (e.g., holding cost and task rejection penalty), operational parameters (e.g., deployment and termination times), performance parameters (e.g., VM’s load effect on task latency) and their interdependence on the optimal policy. Via this comprehensive study, we derived several important properties of the value functions and the corresponding policies. We further examined threshold-type policies which are the common policy in operational deployments. Our work was validated through implementation of several aspects on AWS, and extracting the relevant KPIs while operating this Amazon cloud computing platform under various setups. Our results endorse, that even under this limited platform, the policy found by operating the corresponding MDP procedure suggested in this paper outperforms a-priori determined thresholds and can be utilized to determine a dynamic policy which reacts according to the actual system state. 
%\input{Conclusion}
%\iflong
\vspace*{-10pt}
\appendix
\subsection{Derivation of Bellman equation}\label{app:proof}

For succinctness, we prove for the case where $\zeta_i=0,\;\forall i$.
Note that we assume the deployments only occur at arrivals, while terminations occur only at service completions.
Write the cost function as follows:
%Write $J_i$ the cost for queue $i$
\begin{footnotesize}
	{\footnotesize
\begin{align*}
& J=\int_0^\infty e^{-\gamma t}\Big[\big(\bb(t)\cdot\beta+\bu(t)\cdot r-f*(1-\sum_i^\bq\bu_i(t))\big) dA(t)+\big(\bd(t)\cdot \psi d\bD(t)\big)+\\
&\;\;\sum_i^\bq\big(h_i(t)+\kappa*(1-\bI_i^{\bi})\big)dt\Big],
\end{align*}
\normalsize}
\end{footnotesize}
where $\bD(t)$ is a vector form of the departing processes $\{D(i)\}$.
We take infinitesimal $\theta$ such that only one or no event can happen (probability of more than
one is negligible), and use the dynamic programming principle.
\begin{footnotesize}
	{\footnotesize
\begin{align*}
&J(q(0))=\int_0^\theta e^{-\gamma t}\Big[\big(\bb(t)\cdot\beta+
\bu(t)\cdot r-f*(1-\sum_i^\bq\bu_i(t))\big) dA(t)+\big(\bd(t)\cdot\psi d\bD(t)\big)+\\
&\;\;\sum_i^\bq\big(h_i(t)+\kappa*(1-\bI_i^{\bi})\big)dt\Big]+e^{-\gamma\theta}J(q(\theta))=J^\theta+e^{-\gamma\theta}J(q(\theta)),
\end{align*}
\normalsize}
\end{footnotesize}
Expanding for all options of $J (q(\theta))$ after the time $\theta$ we have:
\begin{footnotesize}
	{\footnotesize
\begin{align}
&J^\theta+e^{-\gamma\theta}\Big[\lambda\theta J(q+\bu)+\sum_i^\bq\mu J(q-e_i)+(1-\lambda\theta-\sum_i^\bq\mu)J(q)    \Big] \label{eq:1DP}
\end{align}
\normalsize}
\end{footnotesize}
We use the following derivation:
\begin{footnotesize}
	{\footnotesize
\begin{equation}
\E\int_0^\theta e^{-\gamma t}dA(t)=\la\int_0^\theta e^{-\gamma t}dt=\la\frac{1-e^{-\gamma\theta}}{\gamma}=\la\theta,
\label{eq:eqproof-E}
\end{equation}
\normalsize}
\end{footnotesize}
In what follows we use $e^{-\gamma\theta}\simeq 1-\gamma\theta$ as $\theta\rightarrow 0$.
%The application of the above to the service processes is as follows. 
Calculate first the time used for a service in the infinitesimal interval $[0,\theta]$.
\begin{footnotesize}
	{\footnotesize
\(
T_i(t)=\int^\theta_0(1-\bI^\bee(t))(1-\bI^\bi(t))dt.
\)
\normalsize}
\end{footnotesize}
Next, denote the potential service time process which counts exponentially distributed service times as $S_i(t)$. Then,
\begin{footnotesize}
	{\footnotesize
\(
D_i(t)=S_i(T_i(t)).
\)
\normalsize}
\end{footnotesize}
Hence, as long as interval $[0,\theta]$ is small, write
\begin{footnotesize}
	{\footnotesize
\(
\E\int_0^\theta e^{-\gamma t}dD_i(t)=(1-\bI^\bee)(1-\bI^\bi)\mu_i=\tilde{\mu_i}.
\)
\normalsize}
\end{footnotesize}
Write the costs incurred in $[0,\theta]$
\begin{footnotesize}
	{\footnotesize
\begin{align*}
&J^\theta=\la\theta(\bu\cdot r-\bb\cdot \beta-f(1-\sum_i^\bq\bu_i))-\tilde\mu\theta(\bd\cdot \psi)-C(q)\theta
\end{align*}
\normalsize}
\end{footnotesize}
We now turn to the second term of~\eqref{eq:1DP}, 
\begin{footnotesize}
	{\footnotesize
\begin{align*}
&(1-\gamma\theta)\Big[\lambda\theta J(q+\bu)+\sum_i^\bq\mu J(q-e_i)+(1-\lambda\theta-\sum_i^\bn\tilde\mu)J(q)\Big]
\end{align*}
\normalsize}
\end{footnotesize}
multiply all sides by $\gamma$ and mind that $\theta^2\simeq 0$.
\begin{footnotesize}
{\footnotesize

\begin{align*}
&\gamma J(q)=\gamma\Big(\la\theta(\bu\cdot r-\bb\cdot\beta-f(1-\sum_i^\bq\bu_i))-\tilde\mu\theta(\bd\cdot\psi)-C(q)\theta\Big)
+\gamma(1-\gamma\theta)\Big[\lambda\theta J(q+\bu)\\
&+\sum_i^\bq\mu\theta J(q-e_i)+(1-\lambda\theta-\sum_i^\bq\tilde\mu\theta)J(q)\Big]
=\gamma\Big(\la\theta(\bu\cdot r-\bb\cdot\beta-f(1-\sum_i^\bq\bu_i))-\tilde\mu\theta(\bd\cdot\psi)\\
&-C(q)\theta\Big)+\Big[\lambda\theta\gamma J(q+\bu)+\sum_i^\bq\tilde\mu\theta\gamma J(q-e_i)
+(\gamma-\lambda\theta\gamma-\sum_i^\bq\tilde\mu\theta\gamma)J(q)-\gamma\theta J(q)\Big]
\end{align*}
\normalsize}
\end{footnotesize}
See that $\gamma J(q)$ on both sides cancels out and denote $\delta_q=(\sum_i^\bq\tilde\mu+\lambda+\gamma)^{-1}$.
Write:
\begin{footnotesize}
	{\footnotesize
\begin{align*}
&\gamma\theta\delta_q^{-1} J(q)=\gamma\theta\big(\la(\bu\cdot r-\bb\cdot\beta-f(1-\sum_i^\bq\bu_i))-\tilde\mu(\bd\cdot\psi)-C(q)\big)+\gamma\theta\Big[\lambda J(q+\bs)+\sum_i^\bq\tilde\mu J(q-e_i) \Big]
\end{align*}
\normalsize}
\end{footnotesize}
Divide all by $\gamma\theta$ and multiply by $\delta_q$
\begin{footnotesize}
	{\footnotesize
\begin{align*}
& J(q)=\big(\la(\bu\cdot r-\bb\cdot \beta-f(1-\sum_i^\bq\bu_i))-\tilde\mu(\bd\cdot\psi)-C(q)\big)\delta_q+\Big[\lambda J(q+\bu)+\sum_i^\bq\tilde\mu J(q-e_i) \Big]\delta_q
\end{align*}
\normalsize}
\end{footnotesize}
Arrange according to the processes:
\begin{footnotesize}
	{\footnotesize
\begin{align*}
& J(q)=\delta_q\la\big[\bu\cdot r-\bb\cdot\beta-f(1-\sum_i^\bq\bu_i)+J(q+\bu)\big]+\Big[\sum_i^\bq\tilde\mu J(q-e_i)-\tilde\mu(\bd\cdot\psi) \Big]\delta_q-C(q)\delta_q
\end{align*}
\normalsize}
\end{footnotesize}
Observe that by definition, $\bu,\bb,\bd$ incorporate (in corresponding feasible cases) the scheduling, build and destroy decisions, respectively. Hence, this equation is identical to~\eqref{eq:b}, where the feasibility is ensured by the corresponding indicators.
\subsection{Bellman equations for simplified state-spaces  }\label{sec:s_BE}
We provide Bellman equations for simplified state-space, such that the optimal policy is directly derived in terms of the average load on all VMs. %We provide two possible implementation options.
%Since this part is mainly utilized as proof of concept (POC), and in order to simplify the procedure, we rely only on a single performance indicator (CPU load),
% ***\FMO{Furthermore, we omitted all the other parameters that we cannot extract from the system and utilized a simplified Bellman equation, in which the value is a function of only the average load and of the number of the active VMs: ***Did we?}***
Consider first a two dimensional state space denoted by $\calS_1$, such that $\calS_1=\calM\times\calL\times\calZ$,
where 
\[
\calM=\{1,\cdots,M_{max}\}\;, \calZ=\{1,\cdots,M_{max}\}\; \text{  and  } \calL=\{0,l,2l,\cdots,L_{max}\}
\]
$M_{max}$ denotes the maximal number of active VMs and $L$ represents deployed VM utilization, i.e., $L$ denotes the average load on all active VMs in units of percentage of the maximal load.  For instance, if there is a single active VM which can accommodate up to 4 tasks, serving a single task, then $L=\frac{1}{4}=25\%$, if there are three active VMs each potentially accommodating up to 4 tasks,  serving altogether 6 tasks (regardless of how many each one serves), the load is: $L=\frac{6}{3\times4}=50\%$.
$L_{max}$ corresponds to the maximal load of $100\%$, and $l$ is the resolution of the space $\calL$, which, for our purposes was chose to correspond to $5\%$ or $10\%$.
The spaces $\calM$ and $\calZ$ stand for the number of active and deploying (i.e., being in the process of deployment but not yet ready to accept tasks) VMs. Clearly it holds $Z+M\leq M_{max}$.
Note that the minimal number of active VMs is set to $1$, which corresponds to the AWS property of preserving at least one active VM. Also note that VMs in the process of deployment are considered as active with zero load. 

Recall that $r$ and $\beta$ denote the reward for admitted tasks and the deployment cost, respectively. 
Also recall the notations for actions $\bb$ and $\bd$, standing for deployment of a new VM and termination of a VM.

Consider $C(L,M,Z)$), the cost per unit of time associated with having $M$ active machines with average load $L$ and $Z$ deploying machines. Observe that upon task service or admission the load does not necessarily increases or decreases by $1$. 
Hence we introduce transition probabilities $p((L',M',Z')|(L,M,Z),\bb)$ and $p((L',M',Z')|(L,M,Z),\bd$). 
The straightforward method to obtain these probabilities and reward $C(L,M,Z)$) for all states is by learning through simulating of the optimal policy for the original queuing system with the optimal policy for the full state-space. Once the learning is done the value function of the simplified state-space can be calculated according to the following:
%The value  associated with state $(L,M,Z)$ is written 
\begin{footnotesize}
	{\footnotesize
		\begin{align}
		&V(L,M,Z)=\big[C(L,M,Z)+\lambda \max_{\bb}\{\sum_{L'}p((L',M,Z)|(L,M,Z),\bb=0)V(L',M,Z)+r,\label{eq:SB1}\\
		&\;\;\;\qquad\qquad\qquad\qquad\qquad\qquad\qquad\sum_{L'}p((L',M,Z+1)|(L,M,Z),\bb=1)V(L',M,Z+1)+r-\beta\}\label{eq:SB2}\\
		&+\;\;\zeta_{L,M,Z}V(L,M+1,Z-1)\label{eq:SB3}\\
		&\;\;+\mu_{L,M,Z}\max\{\sum_{L'}p((L',M,Z)|(L,M,Z),\bd=0)V(L',M,Z),\label{eq:SB5}\\
	&\;\;\;\qquad\qquad\qquad	\sum_{L'}p((L',M-1,Z)|(L,M,Z)\bd=1)V(L',M-1,Z)\}\big]\delta_{L,M,Z}\label{eq:SB4}
		\end{align}
		\normalsize}
\end{footnotesize}
Note that the rates service rates $\mu_{L,M,Z}$ and VM deployment reciprocal time $\zeta_{L,M,Z}$ are learned within the same learning process. 
In~\ref{eq:SB1} and~\ref{eq:SB2} the selection regarding opening a new VM is written. Part~\ref{eq:SB3} refers to the new VM deployment process. The selection if to terminate a VM upon service completion is written in~\ref{eq:SB5} and~\ref{eq:SB4}.
For the sake of succinctness,  we omitted here the boundary conditions of the Bellman equation above. 

In order to further adjust to AWS limitations the space $\calM$ and the space $\calZ$ can be degenerated to 
$\calM=\{0,1\}$, $\calZ=\{0,1\}$ thus being agnostic to the number of deployed VMs and considering only the information if the maximal number of VMs has been deployed or not and if there is a VM in deployment process.
In this case, the entire dynamics of the value function and the transition probabilities will be solely expressed by the load. 
%\ref{eq_Cq} 

\subsection{Proof of Lemma~\ref{lem:dom} }\label{sec:lemdom}
%\begin{lemma}[Value function domination]
We show that for any $q^a\succeq q^b$ it holds $V(q^a)\leq V(q^b)$.
%\end{lemma}
\begin{proof}
To show the statement define two systems, denoted by $a$ and $b$. At time $t$ the state of each system is denoted by $a_t$ and $b_t$. Next, we stochastically couple these systems. Namely, we apply the same policy to both states, denoted as $\pi_{ab}$. The policy is optimal for the system $a$, while system $b$ mimics all actions from system $a$. In particular the mimicking policy follows the following rules:
\begin{enumerate}
	\item A task which is scheduled in $a$ is scheduled in $b$ into the same queue.
	\item In the case where the rejection optimally applies in system $a$, the task is rejected in $b$ as well.
	\item In the case where there is a forced rejection (that is, the corresponding queue in $a$ was full): 
	\begin{itemize}
		\item In the case the same queue in $b$ was also full, the rejection applies in $b$ as well
		\item In the case that queue in $b$ was not full - the task is scheduled and the occupancy difference between these two queues is reduced by $1$.
	\end{itemize} 
\item Each served task departs concurrently from $a$ and $b$.
\item In the case where during a departure in $a$ the same queue in $b$ was empty - nothing happens in $b$. (Hence, system $b$ will keep alive queues even if empty as long as they are kept in $a$).
\item  The "terminate" and "deploy" decisions in $b$ are executed concurrently with $a$. 
\end{enumerate}  
Observe that eventually the systems become identical according to one of the following two cases:
\begin{itemize}
	\item The queues which were unequal in the systems become empty
	\item The queue which were unequal in the systems become full 
\end{itemize}
Now denote the first time $\tau$ when both states become equal, that is $a_\tau=b_\tau$. 
Clearly, by definition of $\pi_{ab}$ it holds $e^{-\gamma \tau}V(a_\tau)=e^{-\gamma \tau}V(b_\tau)$.
Next, as for the rewards in time interval $t\in[0,\tau)$, it holds $r(a_t)\leq r(b_t)$ and as for the fines it holds $f(a_t)\geq f(b_t)$. Finally, by $\pi_{ab}$, all holding costs and costs incurred by deploy and terminate actions which accumulate to the total cost %according to~\eqref{eq:3} 
are higher in $a$.
%Denote  it always holds in $[a,\tau)$ $r(a)=r(b)$. This is because starting at $a$ one gets all the rewards as in $b$, but can pay less delay and holding cost and less fines.
Therefore,
\begin{footnotesize}
	{\footnotesize
\(
V(b_0)\leq J^{\pi_{ab}}(b_0)\leq V(a_0),
\)
\normalsize}
\end{footnotesize}
Hence, the statement in the lemma follows.\qed
\end{proof}

\bibliographystyle{abbrv}
\bibliography{nfv}

%\appendix

%\bibliography{ncmdp-I}
% that's all folks
\end{document}